\def\diag{\textrm{diag}}
\newtheorem{Lemma}{Lemma}
\newtheorem{Corollary}{Corollary}
\newtheorem{lemma}[Lemma]{$\mathbf{Lemma}$}
\newtheorem{corollary}[Corollary]{$\mathbf{Corollary}$}
\begin{document}
\title{ {\LARGE   A General MIMO Framework  for NOMA Downlink and Uplink Transmission Based on Signal Alignment}}

\author{ Zhiguo Ding, \IEEEmembership{Member, IEEE},     Robert Schober, \IEEEmembership{Fellow, IEEE},
  and  H. Vincent Poor, \IEEEmembership{Fellow, IEEE}\thanks{
Z. Ding and H. V. Poor  are with the Department of
Electrical Engineering, Princeton University, Princeton, NJ 08544,
USA.   Z. Ding is also with the School of
Computing and Communications, Lancaster
University, LA1 4WA, UK.  R. Schober is with the Institute for Digital
Communications, University of Erlangen-Nurnberg, Germany. }\vspace{-4em}} \maketitle
\begin{abstract}
    The application of multiple-input multiple-output (MIMO) techniques to non-orthogonal multiple access (NOMA) systems is important to   enhance the performance gains of NOMA. In this paper, a novel MIMO-NOMA framework for downlink and uplink transmission is proposed by applying the concept of signal alignment. By using stochastic geometry, closed-form analytical results are developed to facilitate the performance evaluation of  the proposed framework for randomly deployed users and interferers. The impact of different power allocation strategies, such as fixed power allocation and cognitive radio inspired power allocation, on the performance of MIMO-NOMA is also investigated. Computer simulation results are provided to demonstrate the performance of the proposed framework and the accuracy of the developed analytical results.
\end{abstract}\vspace{-1em}
\section{Introduction}
Non-orthogonal multiple access (NOMA) has been recognized as a spectrally efficient multiple access (MA) technique for the next generation of mobile networks \cite{6692652, NOMAPIMRC,Nomading}. For example,  the use of NOMA has   been recently proposed for    downlink scenarios in 3rd generation partnership project  long-term evolution (3GPP-LTE) systems, and the considering technique was termed  multiuser superposition transmission (MUST)   \cite{3gpp1}. In addition, NOMA has also  been identified  as one of the key radio access  technologies  to increase system capacity and reduce latency in fifth generation (5G) mobile networks   \cite{docom}, \cite{metis}.

The key idea of NOMA is to exploit the power domain for multiple access, which means multiple users can be served concurrently at the same time, frequency, and spreading code. Instead of using water-filling power allocation strategies, NOMA   allocates more power to the  users with poorer channel conditions, with the aim to facilitate  a balanced tradeoff between system throughput and user fairness. Initial system implementations of NOMA in cellular networks have  demonstrated the superior spectral efficiency of NOMA  \cite{6692652}, \cite{NOMAPIMRC}. The performance of NOMA in a network with randomly deployed  single-antenna nodes was investigated in \cite{Nomading}. User fairness in the context of NOMA has been addressed in \cite{Krikidisnoma}, where  power allocation was optimized under different channel state information (CSI) assumptions. In \cite{timnoma}, topological interference management has been applied for single-antenna downlink NOMA transmission. Unlike the above works,  \cite{6933459} addressed the application of NOMA for uplink transmission, where   the problems of power allocation and subcarrier allocation were jointly optimized. The concept of NOMA is not   limited to radio frequency communication networks, and has been recently applied to visible light communication systems in \cite{vlcnoma}.

The application of multiple-input multiple-output (MIMO) technologies to NOMA is important since the use of MIMO provides additional  degrees of freedom for further performance improvement. In \cite{7015589}, the multiple-input single-output  scenario, where  the base station had multiple antennas and users were  equipped with a single antenna, was considered.  In \cite{7095538}, a multiple-antenna  base station used  the NOMA approach to serve two multiple-antenna users simultaneously, where the problem of   throughput maximization was formulated and two  algorithms were proposed to solve the  optimization problem. In many practical scenarios, it is preferable to serve as many users as possible in order to reduce user latency and improve user fairness. Following this rationale, in \cite{Zhiguo_mimoconoma}, users were first grouped into small-size clusters, where NOMA was implemented for  the users within one cluster and MIMO detection was used to cancel inter-cluster interference. Similar  to \cite{6692307}, this method does not need  CSI at the base station; however, unlike \cite{6692307}, it avoids the use of random beamforming which can cause  uncertainties for the quality of service (QoS) experienced by the  users.

This paper   considers a general MIMO-NOMA communication network where  a base station is communicating with multiple users using the same time, frequency, and spreading code resources, in the presence of randomly deployed interferers. The contributions of this paper are listed as follows:
\begin{itemize}
\item A general MIMO-NOMA framework which is applicable to both downlink and uplink transmission is proposed, by applying the concept of signal alignment, originally  developed for multi-way relaying channels in \cite{Lee100} and \cite{6384814}. By exploiting  this framework, the considered  multi-user MIMO-NOMA scenario can be decomposed into multiple separate single-antenna NOMA channels, to which conventional NOMA protocols can be applied straightforwardly.

\item Since  the choice of the power allocation coefficients  is key to achieve a favorable throughput-fairness tradeoff in NOMA systems, two types of power allocation strategies are studied in this paper. The fixed power allocation strategy can realize different QoS requirements in the long term, whereas the cognitive radio inspired power allocation strategy can ensure that users' QoS requirements are met instantaneously.

\item A sophisticated approach for the user precoding/detection vector selection is proposed and combined with the signal alignment framework in order to efficiently exploit the excess degrees of freedom of the MIMO system.  Compared to the existing MIMO-NOMA work in \cite{Zhiguo_mimoconoma}, the  framework proposed in this paper offers two benefits. First,  a larger diversity gain can be achieved, e.g., for a scenario in which all nodes are equipped with $M$ antennas, a diversity order of $M$ is achievable, whereas a diversity gain of $1$ is realized by the scheme in   \cite{Zhiguo_mimoconoma}. Second,  the proposed framework is more general, and   also applicable to the case where the users have fewer  antennas than the base station.

\item Exact expressions and asymptotic performance results are developed in order to obtain an insightful understanding of the proposed MIMO-NOMA framework. In particular, the outage probability is used as the performance criterion since it not only bounds the error probability of detection tightly, but also   can be used to calculate the outage capacity/rate. The impact of the random locations of the  users and the interferers is captured by applying   stochastic geometry, and the diversity order is computed  to illustrate how efficiently the degrees of freedom of the channels are used by the proposed framework.
\end{itemize}

\section{System Model for the Proposed  MIMO-NOMA Framework}\label{section system model}
Consider an  MIMO-NOMA downlink (uplink) communication scenario in which a base station is communicating with multiple  users. The base station is equipped with $M$ antennas and each user is equipped with $N$ antennas. In this paper, we consider the scenario $N> \frac{M}{2}$ in order to implement the concept of signal alignment, an assumption more general than the one used in \cite{Zhiguo_mimoconoma}.  This assumption is applicable to various communication scenarios, such as small cells in heterogenous networks \cite{7070674} and 5G cloud radio access networks \cite{CMCC}, in which low-cost  base  stations are deployed with high density  and it is reasonable to assume that the base stations have capabilities similar to those of user handsets, such as smart phones and tablets.

The users are assumed to be uniformly deployed in a disc, denoted by $\mathcal{D}$, i.e., the cell controlled by the base station. The radius of the disc is $r$, and the base station is located at the center of $\mathcal{D}$. In order to reduce the system load, many existing studies about NOMA have proposed to pair two users for the implementation of NOMA, and have demonstrated that it is ideal to pair two users whose channel conditions are very different  \cite{6692652}, \cite{Zhiguo_CRconoma}. Based on this insight, we assume  that the disc is divided into two regions. The first region  is a smaller disc, denoted by $\mathcal{D}_1$, with   radius   $r_1$ ($r_1<r$) and the base station located at its origin. The second region  is a ring, denoted by $\mathcal{D}_2$, constructed from $\mathcal{D}$ by removing $\mathcal{D}_1$. Assume  that $M$ pairs of users are selected, where  user $m$, randomly located in $\mathcal{D}_1$, is paired with  user $m'$, randomly located in $\mathcal{D}_2$. Hence,  the users are randomly scheduled and paired together. The use of more sophisticated schedulers can further improve the performance of the proposed MIMO-NOMA framework of course, but this is beyond the scope of this paper.

In addition to the  messages sent by the base station, the downlink NOMA users also observe signals sent by   interference sources which are distributed in $\mathcal{R}^2$ according to a homogeneous Poisson point process (PPP)   $\Psi_I$ of density $\lambda_{I}$ \cite{Haenggi}. The same assumption is made for the uplink case. In practice, these interferers   can be cognitive radio transmitters, WiFi access points in LTE in the unlicensed spectrum (LTE-U), or transmitters from different tiers in heterogenous networks.   In order to obtain tractable analytical results, it is assumed that the interference sources are equipped with a single antenna and use identical  transmission powers, denoted by $\rho_{I}$.

Consider the use of a composite channel model with both  quasi-static Rayleigh fading and large scale path loss. In particular, the channel matrix from the base station to  user $m$ is $\mathbf{H}_m=\frac{\mathbf{G}_m}{\sqrt{L(d_m)}}$, where $\mathbf{G}_m$ denotes an $N\times M$ matrix whose elements represent Rayleigh fading channel gains, $d_m$ denotes the distance from the base station to the user, and the resulting  path  loss is modelled as follows:
\begin{eqnarray}
L(d_m)=\left\{\begin{array}{ll}d_m^\alpha, &\text{if} \quad d_m>r_0 \\ \nonumber r_0^\alpha,& \text{otherwise}\end{array}\right.,
\end{eqnarray} where  $\alpha$ denotes the path loss exponent and     parameter $r_0$  avoids a singularity when the distance is small. It is assumed that $r_1\geq r_0$ in order to simplify the analytical results.  For notational simplicity,  the channel matrix from user $m$ to the base station is denoted by $\mathbf{H}_m^H$. Global CSI is assumed to be available  at the users and the base station.  The proposed MIMO-NOMA framework for downlink and uplink transmission is described in the following two subsections, respectively.

\subsection{Downlink MIMO-NOMA Transmission}
The base station  sends the following $M \times 1$ information-bearing vector
 \begin{align}
 \mathbf{s}=\begin{bmatrix}\alpha_{1}s_{1} + \alpha_{1'}s_{1'} \\ \vdots \\ \alpha_{M}s_{M} + \alpha_{M'}s_{M'}  \end{bmatrix},
 \end{align}
 where $s_m$ is the signal intended for the $m$-th user,  $\alpha_m$ is the power allocation coefficient, and $\alpha_{m}^2+\alpha_{m'}^2=1$. The choice of the power allocation coefficients will be discussed later.

 Without loss of generality, we focus on  user $m$, whose observation is give by
 \begin{align}
 \mathbf{y}_m = \frac{\mathbf{G}_m}{\sqrt{L(d_m)}} \mathbf{P} \mathbf{s} +\mathbf{w}_{I_m}+\mathbf{n}_m,
 \end{align}
 where $\mathbf{P}$ is the $M\times M$ precoding matrix to be defined at the end of this subsection, $\mathbf{w}_{I_m}$ denotes the overall co-channel interference received by  user $m$, and $\mathbf{n}_m$ denotes the noise vector. Following the classical shot noise model in \cite{4086349}, the co-channel interference, $\mathbf{w}_{I_m}$, can be expressed as follows:
 \begin{align}
 \mathbf{w}_{I_m}\triangleq \underset{j\in\Psi_I}{\sum}\frac{\sqrt{\rho_I}}{\sqrt{L(d_{I_j,m})}}\mathbf{1}_{N},
  \end{align}
  where $\mathbf{1}_m$ denotes an $m\times 1$ all-one vector, and $d_{I_j,m}$ denotes the distance from  user $m$ to the $j$-th interference source. Note that small scale fading has been omitted in the interference model, since the effect of path loss is more dominant for interferers located far away. In addition, this simplification will facilitate the development of  tractable analytical results.   The case with $\rho_{I}=0$ corresponds to the scenario  without interference.

   User $m$   applies a detection vector $\mathbf{v}_m$ to its observation, and therefore the user's observation can  be re-written as follows:
  \begin{align}
 \mathbf{v}_m^H\mathbf{y}_m &= \mathbf{v}_m^H \frac{\mathbf{G}_m}{\sqrt{L(d_m)}}  \mathbf{P} \mathbf{s} +\mathbf{v}_m^H(\mathbf{w}_{I_m}+\mathbf{n}_m)\\ \nonumber &= \mathbf{v}_m^H \frac{\mathbf{G}_m}{\sqrt{L(d_m)}}  \mathbf{p}_m (\alpha_{m}s_{m} + \alpha_{m'}s_{m'})  +\underset{{\rm interference~ (including ~inter-pair ~interference)  ~+~ noise}}{\underbrace{\sum_{i\neq m}\mathbf{v}_m^H \frac{\mathbf{G}_m}{\sqrt{L(d_m)}}  \mathbf{p}_i (\alpha_{i}s_{i} + \alpha_{i'}s_{i'}) +\mathbf{v}_m^H(\mathbf{w}_{I_m}+\mathbf{n}_m)}},
 \end{align}
 where $\mathbf{p}_m$ denotes the $m$-th column of $\mathbf{P}$.

 In order to remove   inter-pair interference,   the following constraint has to be met:
 \begin{align}
 \begin{bmatrix}\mathbf{v}_m^H\mathbf{G}_m \\\mathbf{v}_{m'}^H\mathbf{G}_{m'}\end{bmatrix} \mathbf{p}_i=\mathbf{0}_{2\times 1}, ~ \forall i\neq m,
 \end{align}
where $\mathbf{0}_{m\times n}$ denotes the $m\times n$ all zero matrix. Without loss of generality, we focus on  $\mathbf{p}_1$ which needs to satisfy the following constraint:
   \begin{align}\label{constraint 1}
 \begin{bmatrix}\mathbf{G}_2^H \mathbf{v}_2&\mathbf{G}_{2'}^H\mathbf{v}_{2'}&\cdots&\mathbf{G}_{M}^H\mathbf{v}_{M} &\mathbf{G}_{M'}^H\mathbf{v}_{M'}\end{bmatrix}^H \mathbf{p}_1=\mathbf{0}_{2(M-1)\times 1}.
 \end{align}
Note that the dimension of the matrix in \eqref{constraint 1}, $\begin{bmatrix}\mathbf{G}_2^H \mathbf{v}_2&\mathbf{G}_{2'}^H\mathbf{v}_{2'}&\cdots&\mathbf{G}_{M}^H\mathbf{v}_{M} &\mathbf{G}_{M'}^H\mathbf{v}_{M'}\end{bmatrix}^H$, is $2(M-1)\times M$. Therefore,  a non-zero vector $\mathbf{p}_i$ satisfying \eqref{constraint 1} does not exist. In order to ensure the existence of $\mathbf{p}_i$, one straightforward approach  is to serve less user pairs, i.e., reducing the number of user pairs to $\left(\frac{M}{2}+1\right)$. However, this approach will reduce the overall system throughput.

To overcome this problem, in this paper,  the concept of interference alignment is applied, which means the   detection vectors are designed to  satisfy the following constraint \cite{Lee09},  \cite{Dingtong11}
 \begin{align}\label{constraint 3}
 \mathbf{v}_m^H\mathbf{G}_m =\mathbf{v}_{m'}^H\mathbf{G}_{m'},
 \end{align}
 or equivalently
\begin{align}
\begin{bmatrix} \mathbf{G}_m^H &-\mathbf{G}_{m'}^H\end{bmatrix} \begin{bmatrix} \mathbf{v}_m \\\mathbf{v}_{m'}\end{bmatrix}=\mathbf{0}_{M\times 1}.
 \end{align}
 Define $\mathbf{U}_m$ as the $2N\times (2N-M)$ matrix containing the $(2N-M)$ right singular vectors of  $
\begin{bmatrix} \mathbf{G}_m^H &-\mathbf{G}_{m'}^H\end{bmatrix} $ corresponding to its zero singular values. Therefore, the detection vectors at the users are designed  as follows:
\begin{align}\label{dtection 1}
  \begin{bmatrix} \mathbf{v}_m \\\mathbf{v}_{m'}\end{bmatrix} = \mathbf{U}_m \mathbf{x}_m,
\end{align}
where $\mathbf{x}_m$ is a  $(2N-M)\times 1$  vector to be defined later. We normalize  $\mathbf{x}_m$  to $2$,  i.e., $|\mathbf{x}|^2=2$, due to the following two reasons. First, the uplink transmission power has to be  constrained as shown in the following subsection. Second, this facilitates the performance analysis carried out in the next section. It is straightforward to show that the choice of the detection vectors in \eqref{dtection 1} satisfies $\begin{bmatrix} \mathbf{G}_m^H &-\mathbf{G}_{m'}^H\end{bmatrix} \mathbf{U}_m \mathbf{x}_m=\mathbf{0}_{M\times 1}$.

  The effect of the signal alignment based design in \eqref{constraint 3} is the projection of the channels of the two users in the same pair into the same direction. Define $\mathbf{g}_m\triangleq \mathbf{G}_m^H\mathbf{v}_m$ as the effective channel vector shared by the two users. As a result, the number of the rows in the matrix in \eqref{constraint 1} can be reduced significantly. In particular, the constraint for $\mathbf{p}_i$ in \eqref{constraint 1} can be rewritten as follows:
   \begin{align}\label{constraint 2}
 \begin{bmatrix}\mathbf{g}_1&\cdots &\mathbf{g}_{i-1} &\mathbf{g}_{i+1}&\cdots&\mathbf{g}_M\end{bmatrix}^H \mathbf{p}_i=\mathbf{0}_{(M-1)\times 1}.
 \end{align}
Note that $\begin{bmatrix}\mathbf{g}_1&\cdots &\mathbf{g}_{i-1} &\mathbf{g}_{i+1}&\cdots&\mathbf{g}_M\end{bmatrix}^H $ is an $(M-1)\times M$ matrix, which means that a $\mathbf{p}_i$ satisfying \eqref{constraint 2} exists.

 Define $\mathbf{G}\triangleq \begin{bmatrix}\mathbf{g}_1 &\cdots&\mathbf{g}_M\end{bmatrix}^{H} $.
 A zero forcing based precoding matrix at the base station can be designed as follows:
 \begin{align}\label{p design}
 \mathbf{P } =  \mathbf{G}^{-H}\mathbf{D},
 \end{align}
 where $\mathbf{D}$ is a diagonal matrix to ensure power normalization at the base station, i.e., $\mathbf{D}^2=\diag \{\frac{1}{(\mathbf{G}^{-1}\mathbf{G}^{-H})_{1,1}}, \cdots, \frac{1}{(\mathbf{G}^{-1}\mathbf{G}^{-H})_{M,M}}\}$, where $(\mathbf{A})_{m,m}$ denotes the $m$-th element on the main diagonal of $\mathbf{A}$. As a result, the transmission power at the base station can be constrained as follows:
 \begin{align}
 & {\rm tr}\left\{\mathbf{P}\mathbf{P}^H\right\}\rho={\rm tr}\left\{ \mathbf{G}^{-H}\mathbf{D} \mathbf{D}^H\mathbf{G}^{-1}\right\}\rho ={\rm tr}\left\{ \mathbf{G}^{-1}\mathbf{G}^{-H}\mathbf{D}^2\right\}\rho=M\rho,
 \end{align}
 where $\rho$ denotes the transmit signal-to-noise ratio (SNR).

  With the design in \eqref{constraint 3} and \eqref{p design}, the signal model for  user  $m$ can now be written as follows:
   \begin{align}
 \mathbf{v}_m^H\mathbf{y}_m &=   \frac{\mathbf{g}_m^H}{\sqrt{L(d_m)}} \mathbf{p}_m (\alpha_{m}s_{m} + \alpha_{m'}s_{m'})  + \sum_{i\neq m}\frac{\mathbf{g}_m^H}{\sqrt{L(d_m)}} \mathbf{p}_i (\alpha_{i}s_{i} + \alpha_{i'}s_{i'}) +\mathbf{v}_m^H(\mathbf{w}_{I_m}+\mathbf{n}_m)\\ \nonumber &=   \frac{ (\alpha_{m}s_{m} + \alpha_{m'}s_{m'}) }{\sqrt{(L(d_m))(\mathbf{G}^{-1}\mathbf{G}^{-H})_{m,m}}} +\mathbf{v}_m^H(\mathbf{w}_{I_m}+\mathbf{n}_m).
 \end{align}

 For notational simplicity, we define $y_m=\mathbf{v}_m^H\mathbf{y}_m$, $h_m= \frac{1 }{\sqrt{L(d_m)(\mathbf{G}^{-1}\mathbf{G}^{-H})_{m,m}}} $, $ w_{I_m} = \mathbf{v}_m^H\mathbf{w}_{I_m}$, and $n_m=\mathbf{v}_m^H\mathbf{n}_m$. Therefore, the use of the signal alignment based precoding and detection matrices   decomposes the multi-user MIMO-NOMA channels into $M$ pairs of single-antenna NOMA channels. In particular, within each pair, the two users receive the following scalar observations
 \begin{align}
 y_m = h_m  (\alpha_{m}s_{m} + \alpha_{m'}s_{m'}) +w_{I_{m}}+n_m ,
 \end{align}
 and
 \begin{align}
 y_{m'} = h_{m'}  (\alpha_{m}s_{m} + \alpha_{m'}s_{m'})+w_{I_{m'}} +n_{m'},
 \end{align}
 where $y_{m'}$ and $n_{m'}$ are defined similar to $y_m$ and $n_m$, respectively. Note that $ h_{m'}= \frac{1 }{\sqrt{L(d_{m'})(\mathbf{G}^{-1}\mathbf{G}^{-H})_{m,m}}} $, and it is important to point out that $h_m$ and $h_{m'}$ share the same small scale fading gain with different distances.

 Recall that   two users belonging to the same pair are selected from $\mathcal{D}_1$ and $\mathcal{D}_2$, respectively,  which means that $d_m<d_{m'}$. Therefore, the two users from the same pair are ordered without any ambiguity, which simplifies the design of the power allocation coefficients, i.e., $\alpha_m\leq \alpha_{m'}$, following the NOMA principle. User $m'$   decodes its message with the following signal-to-interference-plus-noise ratio (SINR)
 \begin{align}
 SINR_{m'} =\frac{\rho|h_{m'}|^2\alpha_{m'}^2}{\rho|h_{m'}|^2\alpha_{m}^2+|\mathbf{v}_{m'}|^2+
 |\mathbf{v}_{m'}^H\mathbf{1}_N|^2
 I_{m'}},
 \end{align}
 where   the interference term  is given by
  \begin{align}
 I_{m'}= \underset{j\in\Psi_I}{\sum}\frac{\rho_I}{L\left(d_{I_j,m'}\right)},
  \end{align}
 User $m$   carries  out successive interference cancellation (SIC)  by first removing the message to user $m'$ with   SINR, $SINR_{m,m'} = \frac{\rho|h_{m}|^2\alpha_{m'}^2}{\rho|h_{m}|^2\alpha_{m}^2+|\mathbf{v}_m|^2+
 |\mathbf{v}_m^H\mathbf{1}_N|^2
 I_{m'}}$, and then decoding  its own message  with SINR
  \begin{align}
 SINR_{m} = \frac{\rho |h_{m}|^2\alpha_{m}^2}{|\mathbf{v}_m|^2+
 |\mathbf{v}_m^H\mathbf{1}_N|^2
 I_{m}} .
 \end{align}
 which becomes the SNR if $\rho_I=0$.

 \subsection{Uplink MIMO-NOMA transmission}
For the NOMA uplink case, user $m$ will send out an information bearing message $s_m$, and the signal transmitted by this user is denoted by $\alpha_m \mathbf{v}_ms_m$. Because of the reciprocity between uplink and downlink channels,   $\mathbf{v}_m$ which was used as a downlink detection vector can be used as a  precoding vector for the uplink scenario. Similarly $\mathbf{P}$ will be used as the detection matrix for the uplink case.   In this paper, we assume  that  the total transmission power from one user pair is normalized as follows:
 \begin{align}\alpha_m^2 |\mathbf{v}_m|^2+\alpha_{m'}^2 |\mathbf{v}_{m'}|^2\leq2\rho.
 \end{align}

The base station observes the following signal:
 \begin{align}
 \mathbf{y}_{BS}& = \sum^{M}_{m=1}\left(\frac{\mathbf{G}_m^H\alpha_m \mathbf{v}_ms_m}{\sqrt{L(d_m)}}   +\frac{\mathbf{G}_{m'}^H \alpha_{m'} \mathbf{v}_{m'}s_{m'}}{\sqrt{L(d_{m'})}}   \right) +\mathbf{w}_{I} +\mathbf{n}_{BS},
 \end{align}
 where $\mathbf{w}_{I} $ is the interference term defined as follows
  \begin{align}
 \mathbf{w}_{I}\triangleq \underset{j\in\Psi_I}{\sum}\frac{\sqrt{\rho_I}}{\sqrt{L\left(d_{I_j,BS}\right)}}\mathbf{1}_{M},
  \end{align}
$d_{I_j,BS}$ denotes the distance between the base station and the $j$-th interferer,   and the noise term is defined similarly as  in the previous section.
   The base station   applies a detection matrix $\mathbf{P}$ to its observations and the system model at the base station can be written as follows:
  \begin{align}\nonumber
 \mathbf{P}^H\mathbf{y}_{BS} &= \mathbf{P}^H\sum^{M}_{m=1}\left(\frac{\mathbf{G}_m^H\alpha_m \mathbf{v}_ms_m}{\sqrt{L(d_{m})}}   +\frac{ \mathbf{G}_{m'}^H \alpha_{m'} \mathbf{v}_{m'}s_{m'}}{\sqrt{L(d_{m'})}}   \right) +\mathbf{P}^H(\mathbf{w}_{I} +\mathbf{n}_{BS}).
 \end{align}

 As a result, the symbols from the $m$-th user pair can be detected based on
  {\small \begin{align}\nonumber
 \mathbf{p}_m^H\mathbf{y}_{BS} = \mathbf{p}_m^H \left(\frac{\mathbf{G}_m^H\alpha_m \mathbf{v}_ms_m}{\sqrt{L(d_{m})}}   +\frac{ \mathbf{G}_{m'}^H \alpha_{m'} \mathbf{v}_{m'}s_{m'}}{\sqrt{L(d_{m'})}}   \right) +\underset{{\rm interference(including~inter-pair~interference)~+~noise}}{\underbrace{\mathbf{p}_m^H\sum_{i\neq m}\left(\frac{\mathbf{G}_i^H\alpha_i \mathbf{v}_is_i}{\sqrt{L(d_{i})}}   +\frac{ \mathbf{G}_{i'}^H \alpha_{i'} \mathbf{v}_{i'}s_{i'}}{\sqrt{L(d_{i'})}}   \right) +\mathbf{p}_m^H(\mathbf{w}_{I}+\mathbf{n}_{BS})}}.
 \end{align}}
 In order to avoid   inter-pair interference,   the following constraint needs to be met
 \begin{align}
 \mathbf{p}_m^H\sum_{i\neq m}\left(\frac{\mathbf{G}_i^H\alpha_i \mathbf{v}_is_i}{\sqrt{L(d_{i})}}   +\frac{ \mathbf{G}_{i'}^H \alpha_{i'} \mathbf{v}_{i'}s_{i'}}{\sqrt{L(d_{i'})}}   \right)  =0,~\forall m\neq i.
 \end{align}

   Applying again the concept of signal alignment, the constraint that $ \mathbf{G}_{m}^H   \mathbf{v}_{m} =\mathbf{G}_{m'}^H  \mathbf{v}_{m'}$ is imposed  on the precoding vectors $\mathbf{v}_m$.
 Therefore, the same design of $\mathbf{v}_m$ as shown in \eqref{dtection 1} can be used. The total transmission power within one pair is given by
 \begin{align}
  &\rho\alpha_m^2 |\mathbf{v}_m|^2+\rho\alpha_{m'}^2 |\mathbf{v}_{m'}|^2  \leq\rho\max(\alpha_m^2,\alpha_{m'}^2)(|\mathbf{v}_m|^2+ |\mathbf{v}_{m'}|^2) \leq 2\rho.
\end{align}
 Therefore, the use of the precoding vector in \eqref{dtection 1}   ensures that  the total transmission power of one user pair is constrained.

  Applying the detection matrix defined in \eqref{p design}, the system model for the base station to decode the messages from the $m$-th pair can be written as follows:
   \begin{align}\label{uplink 1}
 y_{BS,m}=    h_m\alpha_m s_m   +h_{m'} \alpha_{m'} s_{m'}   + w_{BS,m}+n_{BS,m},
  \end{align}
  where $y_{BS,m}=\mathbf{p}_m^H\mathbf{y}_{BS}$, $w_{BS,m}=\mathbf{p}_m^H\mathbf{w}_{I}$, and $n_{BS,m}=\mathbf{p}_m^H\mathbf{n}_{BS}$.
  Therefore,  using the proposed precoding and detection matrices, we can decompose the multi-user MIMO-NOMA uplink channel into $M$ orthogonal  single-antenna NOMA channels.
  Note that the variance of the noise is normalized  as illustrated in the following:
  \begin{align}\label{vairance of noise}
  &\mathcal{E}\{\mathbf{p}_m^H\mathbf{n}_{BS}\mathbf{n}_{BS}^H\mathbf{p}_m\} = \mathbf{p}_m^H \mathbf{p}_m = (\mathbf{P}^H\mathbf{P})_{m,m}= (\mathbf{D}^H\mathbf{G}^{-1}\mathbf{G}^{-H}\mathbf{D})_{m,m}=  \frac{( \mathbf{G}^{-1}\mathbf{G}^{-H} )_{m,m}}{(\mathbf{G}^{-1}\mathbf{G}^{-H})_{m,m}}=1.
    \end{align}
    The SIC strategy can be applied to decode the users' messages, following   steps similar to those used in  the downlink scenario.

  \section{Performance Analysis for Downlink MIMO-NOMA Transmission}\label{section downlink}
Two types of power allocation policies are considered in this section. One is fixed power allocation and the other is inspired by the cognitive ratio concept, as  illustrated in the following two subsections, respectively. Recall that the precoding vectors $\mathbf{v}_m$ and $\mathbf{v}_{m'}$ are determined by $\mathbf{x}_m$ as shown in \eqref{dtection 1}. In this section,  a random choice of $\mathbf{x}_m$ is considered first. How to find a more sophisticated choice for $\mathbf{x}_m$ is investigated  in Section \ref{section extension}.

 \subsection{Fixed Power Allocation}
 In this case, the power allocation coefficients $\alpha_m$ and $\alpha_{m'}$ are constant and not related to the instantaneous realizations of the fading channels.  We will first focus on the outage performance of user $m'$. The outage probability of  user $m'$ to decode its   information is given by
 \begin{align}
&\mathrm{P}^o_{m'}= \mathrm{P}\left(\log\left(1+\frac{\rho|h_{m'}|^2\alpha_{m'}^2}{\rho|h_{m'}|^2
 \alpha_{m}^2+|\mathbf{v}_{m'}|^2+|\mathbf{v}_{m'}^H\mathbf{1}_N|^2I_{m'}}\right)< R_{m'}\right),
 \end{align}
 where $\mathrm{P}(x<a)$ denotes the probability for the event $x<a$. 
 The correlation between $\mathbf{v}_{m'}$ and $h_{m'}$ makes the evaluation of the above outage probability very challenging. Hence,  we  focus on the following modified expression for the outage probability
\begin{align}\nonumber
&\tilde{\mathrm{P}}_{m'}=  \mathrm{P}\left(\log\left(1+\frac{\rho|h_{m'}|^2\alpha_{m'}^2}{\rho|h_{m'}|^2
 \alpha_{m}^2+2+2\delta I_{m'}}\right)< R_{m'}\right).
\end{align}
Since $|\mathbf{v}_{m'}|^2+|\mathbf{v}_{m}|^2=2$, we have $|\mathbf{v}_{m'}|^2\leq2$ and $|\mathbf{v}_{m}|^2\leq2$. In addition, because $(\frac{1}{N}\sum_{n=1}^Nx_n)^2\leq \frac{1}{N}\sum^{N}_{n=1}x_n^2$, $|\mathbf{v}_{m'}^H\mathbf{1}_N|^2\leq N|\mathbf{v}_{m'}|^2$. Therefore, we have
\begin{align}
 {\mathrm{P}}_{m'}^o\leq \tilde{\mathrm{P}}_{m'},
\end{align}
for $\delta\geq N$, which means that $\tilde{\mathrm{P}}_{m'}$ provides an upper bound on $ {\mathrm{P}}_{m'}$ if $\delta \geq N$. Note that when $\delta=1$, the difference between $\tilde{\mathrm{P}}_{m'}$ and ${\mathrm{P}}_{m'}$ is very small as can be observed  from Fig. \ref{bound}, i.e., a choice of $\delta=1$ is sufficient to ensure that  $\tilde{\mathrm{P}}_{m'}$  provides a very tight approximation to  ${\mathrm{P}}_{m'}$. In addition, the use of $\tilde{\mathrm{P}}_{m'}$ will be sufficient to identify the achievable diversity order of the proposed MIMO-NOMA scheme.

\begin{figure}[!htbp]\centering\vspace{-1em}
    \epsfig{file=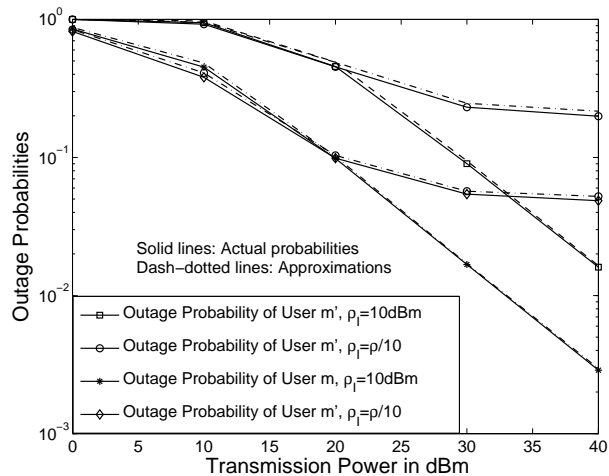, width=0.46\textwidth, clip=}
\caption{ Comparison between $\tilde{\mathrm{P}}_{i'}$ and $ {\mathrm{P}}_{i'}^o$, $i\in \{m,m'\}$. $R_m=R_{m'}=1.5$ bit per channel use (BPCU). $\lambda_I=10^{-4}$. $r=20$m and $r_1=10$m. $r_0=1$m and $\alpha_{m'}=\frac{3}{4}$. $M=N=2$.  The path loss exponent is $\alpha=3$, and the noise power is $-30$dBm.    }\label{bound}\vspace{-1em}
\end{figure}

 Given a random choice of $\mathbf{x}_m$, the following lemma provides an exact expression for $\tilde{\mathrm{P}}_{m'}$ as well as its high SNR approximation.
\begin{lemma}\label{lemma1}
If  $ \alpha_{m'}^2 \leq \alpha_{m}^2 \epsilon_{m'}$,   the probability $\tilde{\mathrm{P}}_{m'}=1$, where $\epsilon_{m'}=2^{R_{m'}}-1$. Otherwise the probability $\tilde{\mathrm{P}}_{m'}$ can be expressed as follows:
\begin{align}
\tilde{\mathrm{P}}_{m'}&=1-  \frac{2}{  r^2-  r_1^2} \int_{r_1}^r e^{- 2\phi_{m'}x^\alpha}\varphi_I(x)xdx,
\end{align}
where $\phi_{m'}= \frac{\epsilon_{m'}}{\rho\alpha_{m'}^2 - \rho\alpha_{m}^2 \epsilon_{m'}}$,   $
\varphi_I(x) = e^{-\pi \lambda_I (\beta_{m'}(x))^{\frac{2}{\alpha}}
\gamma\left(\frac{1}{\alpha},\frac{\beta_{m'}(x)}{r_0^\alpha}\right)}$,  $\beta_{m'}(x)=2\phi_{m'}\delta \rho_I L\left(x^\alpha\right)$, and $\gamma(\cdot)$ denotes the incomplete Gamma function.

If $\rho_I$ is fixed and transmit SNR $\rho$ approaches infinity, the outage probability can be approximated as follows:
\begin{align}
\tilde{\mathrm{P}}_{m'}&\approx\frac{2\phi_{m'}(2+\tilde{\theta}_{m'})}{  r^2-  r_1^2}\frac{\left(r^{\alpha+2}-r_1^{\alpha+2}\right)}{\alpha+2},
\end{align}
where $\tilde{\theta}_{m'}=2\pi \lambda_I\delta \rho_I\frac{\alpha}{r_0}$.
For the special case of $\rho_I=0$,  $\tilde{\mathrm{P}}_{m'}$ simplifies to
{\small \begin{align}
&\tilde{\mathrm{P}}_{m'}   = 1 - \frac{1}{  r^2-  r_1^2} \left(  e^{-2\phi_{m'}r^{\alpha}} r^2 - e^{-2\phi_{m'}r_1^{\alpha}} r_1^2\right)-\frac{(2\phi_{m'})^{-\frac{2}{\alpha}}}{  r^2-  r_1^2}\left(\gamma\left(\frac{2}{\alpha}+1, 2\phi_{m'}r^{\alpha}\right)-\gamma\left(\frac{2}{\alpha}+1, 2\phi_{m'}r_1^{\alpha}\right)\right) .
\end{align}}
\end{lemma}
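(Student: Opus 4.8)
The plan is to reduce the outage event to a single threshold test on the effective channel gain, and then to integrate out, in order, the small-scale fading, the interference field, and the user location.

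First I would rewrite the event defining $\tilde{\mathrm{P}}_{m'}$. Since $\log(1+z)<R_{m'}$ is equivalent to $z<\epsilon_{m'}$ with $\epsilon_{m'}=2^{R_{m'}}-1$, the outage condition becomes $\rho|h_{m'}|^2(\alpha_{m'}^2-\epsilon_{m'}\alpha_{m}^2)<\epsilon_{m'}(2+2\delta I_{m'})$. When $\alpha_{m'}^2\le \alpha_{m}^2\epsilon_{m'}$ the left-hand side is non-positive while the right-hand side is strictly positive, so the event holds surely and $\tilde{\mathrm{P}}_{m'}=1$; this disposes of the degenerate case. Otherwise I divide by the positive quantity $\rho(\alpha_{m'}^2-\epsilon_{m'}\alpha_{m}^2)$ and substitute $h_{m'}=1/\sqrt{L(d_{m'})(\mathbf{G}^{-1}\mathbf{G}^{-H})_{m,m}}$ to get the equivalent event $g<2\phi_{m'}L(d_{m'})(1+\delta I_{m'})$, where $g\triangleq 1/(\mathbf{G}^{-1}\mathbf{G}^{-H})_{m,m}=1/[(\mathbf{G}^H\mathbf{G})^{-1}]_{m,m}$ is the effective zero-forcing power gain and $\phi_{m'}$ is as in the statement.

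The key technical ingredient, and the step I expect to be the main obstacle, is the distribution of $g$. Although the detection vectors $\mathbf{v}_m,\mathbf{v}_{m'}$ are correlated with $\mathbf{G}_m,\mathbf{G}_{m'}$ through the alignment constraint \eqref{constraint 3} and the random choice of $\mathbf{x}_m$, one must show that the columns $\mathbf{g}_m=\mathbf{G}_m^H\mathbf{v}_m$ are independent and isotropically distributed complex Gaussian vectors, so that the standard result for a square matrix with i.i.d. Gaussian entries applies: $g=1/[(\mathbf{G}^H\mathbf{G})^{-1}]_{m,m}$ is chi-square with two degrees of freedom, i.e. unit-mean exponential. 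It is reassuring that the final expressions carry no dependence on $N$, which is exactly what one expects if $g$ is exponential for every admissible $N>\frac{M}{2}$, the excess degrees of freedom being unused by a random $\mathbf{x}_m$. Granting $\mathrm{P}(g\ge t)=e^{-t}$ and using $L(d_{m'})=d_{m'}^{\alpha}$ (valid since $d_{m'}\ge r_1\ge r_0$), conditioning on $d_{m'}$ and on $I_{m'}$ yields $\mathrm{P}(g\ge 2\phi_{m'}d_{m'}^{\alpha}(1+\delta I_{m'}))=e^{-2\phi_{m'}d_{m'}^{\alpha}}e^{-2\phi_{m'}\delta d_{m'}^{\alpha}I_{m'}}$.

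Finally I would integrate out the interference and the location. Averaging $e^{-2\phi_{m'}\delta d_{m'}^{\alpha}I_{m'}}$ over $\Psi_I$ is precisely the Laplace transform of the shot-noise interference at $s=2\phi_{m'}\delta d_{m'}^{\alpha}$, which by the probability generating functional of the homogeneous PPP equals $\exp(-2\pi\lambda_I\int_0^{\infty}(1-e^{-s\rho_I/L(u)})u\,du)$. Splitting this radial integral at $r_0$ and substituting $t=s\rho_I u^{-\alpha}$, the near-field contributions cancel and an integration by parts produces the incomplete-Gamma expression $\varphi_I(x)$ of the statement (convergence requires $\alpha>2$). Since the users in $\mathcal{D}_2$ are uniform, $d_{m'}$ has density $2x/(r^2-r_1^2)$ on $[r_1,r]$, so averaging $e^{-2\phi_{m'}x^{\alpha}}\varphi_I(x)$ against this density and taking the complement gives the exact integral formula. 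For the high-SNR regime I would let $\phi_{m'}\to 0$, expand both $e^{-2\phi_{m'}x^{\alpha}}$ and $\varphi_I(x)$ to first order in $\phi_{m'}$, and integrate the resulting monomial $x^{\alpha+1}$ to obtain the stated approximation, with $\tilde{\theta}_{m'}$ collecting the leading interference term. Setting $\rho_I=0$ forces $\varphi_I\equiv 1$, and the single integral $\int_{r_1}^{r} e^{-2\phi_{m'}x^{\alpha}}x\,dx$ is then evaluated by parts together with the substitution $t=2\phi_{m'}x^{\alpha}$, producing the closed form in terms of $\gamma(\tfrac{2}{\alpha}+1,\cdot)$.
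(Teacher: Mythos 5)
Your proposal follows essentially the same route as the paper's Appendix A: reduce the outage event to a threshold test on the effective zero-forcing gain, argue that $\mathbf{g}_m=\mathbf{G}_m^H\mathbf{v}_m$ remains isotropic complex Gaussian despite the alignment constraint (the paper invokes Proposition 1 of \cite{Dingtong11} plus the idempotent-projection identity for $1/(\mathbf{G}^{-1}\mathbf{G}^{-H})_{m,m}$ to get the unit-mean exponential you assume), then average the resulting exponential over the PPP interference via its probability generating functional and over the uniform location of user $m'$, with the same first-order expansions at high SNR and the same change of variables for the $\rho_I=0$ closed form. The steps, their order, and the key external ingredient are all the ones the paper uses, so the proposal is correct and matches the paper's proof.
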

\begin{proof}
Please refer to  Appendix A.
\end{proof}
By using the high SNR approximation obtained in Lemma \ref{lemma1} and also  the fact that  both $\phi_{m'}$ and $\theta_{m'}$ are at the order of $\frac{1}{\rho}$, the achievable diversity gain is obtained   in the following corollary.
\begin{corollary}
If  $ \alpha_{m'}^2 > \alpha_{m}^2 \epsilon_{m'}$, the diversity order achieved by the proposed MIMO-NOMA framework for  user $m'$ is one.
\end{corollary}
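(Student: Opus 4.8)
The plan is to read off the diversity order directly from its definition
\[
d_{m'} = -\lim_{\rho\to\infty}\frac{\log \tilde{\mathrm{P}}_{m'}}{\log\rho},
\]
feeding in the high-SNR approximation of $\tilde{\mathrm{P}}_{m'}$ already furnished by Lemma \ref{lemma1}. Since $\tilde{\mathrm{P}}_{m'}$ upper-bounds the true outage probability $\mathrm{P}^o_{m'}$ for $\delta\geq N$, and the bound is asymptotically tight, it suffices to determine the SNR exponent of $\tilde{\mathrm{P}}_{m'}$.

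First I would isolate the dependence on the transmit SNR $\rho$ in
\[
\tilde{\mathrm{P}}_{m'}\approx \frac{2\phi_{m'}(2+\tilde{\theta}_{m'})}{r^2-r_1^2}\,\frac{r^{\alpha+2}-r_1^{\alpha+2}}{\alpha+2}.
\]
Under the hypothesis $\alpha_{m'}^2>\alpha_m^2\epsilon_{m'}$, the quantity $\alpha_{m'}^2-\alpha_m^2\epsilon_{m'}$ is a strictly positive constant, so that $\phi_{m'}=\epsilon_{m'}/[\rho(\alpha_{m'}^2-\alpha_m^2\epsilon_{m'})]$ scales as $\Theta(1/\rho)$. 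Because $\tilde{\theta}_{m'}=2\pi\lambda_I\delta\rho_I\alpha/r_0$ and the geometric factor $(r^{\alpha+2}-r_1^{\alpha+2})/[(\alpha+2)(r^2-r_1^2)]$ are both independent of $\rho$ when $\rho_I$ is held fixed, the whole bracket $2\phi_{m'}(2+\tilde{\theta}_{m'})$ inherits the order $\Theta(1/\rho)$: both the term $4\phi_{m'}$ and the interference-induced term $2\phi_{m'}\tilde{\theta}_{m'}$ vanish like $1/\rho$. Collecting the $\rho$-independent constants into a single $c>0$, one obtains $\tilde{\mathrm{P}}_{m'}\approx c/\rho$.

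Substituting this into the definition then gives
\[
d_{m'}=-\lim_{\rho\to\infty}\frac{\log(c/\rho)}{\log\rho}=-\lim_{\rho\to\infty}\frac{\log c-\log\rho}{\log\rho}=1,
\]
which is the claimed diversity order. I would also point out that the hypothesis $\alpha_{m'}^2>\alpha_m^2\epsilon_{m'}$ is indispensable here: when it fails, Lemma \ref{lemma1} yields $\tilde{\mathrm{P}}_{m'}=1$, i.e.\ an error floor and a diversity order of zero, so the corollary is stated precisely in the regime where the floor is removed.

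The main obstacle is making sure that the leading-order term genuinely dictates the asymptotics, that is, that there is no cancellation and that the higher-order terms dropped in the expansion of Lemma \ref{lemma1} are $o(1/\rho)$. Since the coefficient multiplying $\phi_{m'}$ is a strictly positive constant and $\phi_{m'}\to 0$ monotonically, the $\Theta(1/\rho)$ term dominates and the limit is unaffected. A secondary point to address is that $\tilde{\mathrm{P}}_{m'}$ is only an upper bound on $\mathrm{P}^o_{m'}$; to conclude that the \emph{achieved} diversity order equals one rather than merely being at least one, I would invoke the tightness of the bound demonstrated in Fig.\ \ref{bound}, which shows $\tilde{\mathrm{P}}_{m'}$ and $\mathrm{P}^o_{m'}$ share the same $\Theta(1/\rho)$ scaling.
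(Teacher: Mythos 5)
Your proof is correct and follows essentially the same route as the paper: the authors likewise read the diversity order off the high-SNR approximation of $\tilde{\mathrm{P}}_{m'}$ from Lemma \ref{lemma1}, using the fact that $\phi_{m'}$ scales as $1/\rho$ while the remaining factors are $\rho$-independent. Your additional remarks on the positivity of $\alpha_{m'}^2-\alpha_m^2\epsilon_{m'}$ and the tightness of the upper bound are consistent with the paper's (much terser) justification.
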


On the other hand,  user $m$   first decodes the message for user $m'$ before decoding its own message via SIC. Therefore, the outage probability at  user $m$  is given by
 \begin{align}
&\mathrm{P}^o_{m}= \mathrm{P}\left(\log\left(1+\frac{\rho|h_{m}|^2\alpha_{m'}^2}{\rho|h_{m}|^2
 \alpha_{m}^2+|\mathbf{v}_m|^2+|\mathbf{v}_m^H\mathrm{1}_N|^2I_{m}}\right)< R_{m'}\right)\\ \nonumber &+\mathrm{P}\left(\log\left(1+\frac{\rho|h_{m}|^2\alpha_{m}^2}{|\mathbf{v}_m|^2+|\mathbf{v}_m^H\mathrm{1}_N|^2I_{m}}\right)< R_{m},\log\left(1+\frac{\rho|h_{m}|^2\alpha_{m'}^2}{\rho|h_{m}|^2
 \alpha_{m}^2+|\mathbf{v}_m|^2+|\mathbf{v}_m^H\mathrm{1}_N|^2I_{m}}\right)> R_{m'}\right).
 \end{align}
Again,  we   focus on a modified  expression for  the outage probability as follows:
\begin{align}\label{outage mm}
\tilde{\mathrm{P}}_{m}&= \mathrm{P}\left(\log\left(1+\frac{\rho|h_{m}|^2\alpha_{m'}^2}{\rho|h_{m}|^2
 \alpha_{m}^2+2+2\delta  I_{m}}\right)< R_{m'}\right)\\ \nonumber &+\mathrm{P}\left(\log\left(1+\frac{\rho|h_{m}|^2\alpha_{m}^2}{2+2\delta I_{m}}\right)< R_{m},\log\left(1+\frac{\rho|h_{m}|^2\alpha_{m'}^2}{\rho|h_{m}|^2
 \alpha_{m}^2+2+2\delta I_{m}}\right)> R_{m'}\right),
 \end{align}
 which is   an upper bound for $\delta \geq N$ as explained in the proof for Lemma~\ref{lemma2}.   Fig. \ref{bound} demonstrates that $\tilde{\mathrm{P}}_{m}$ with a choice of $\delta=1$ yields a tight upper approximation  on $ {\mathrm{P}}_{m}$.
The following lemma provides an exact expression for this  probability as well as its high SNR approximation.
\begin{lemma}\label{lemma2}
If  $ \alpha_{m'}^2 \leq \alpha_{m}^2 \epsilon_{m'}$,  the probability $\tilde{\mathrm{P}}_{m}=1$, otherwise the probability $\tilde{\mathrm{P}}_{m'}$ can be expressed as follows:
\begin{align}
\tilde{\mathrm{P}}_{m}&=1-  \frac{2}{   r_1^2} \int_{0}^{r_0} e^{- 2\tilde{\phi}_{m}r_0^\alpha}\varphi_I(r_0)xdx- \frac{2}{   r_1^2} \int_{r_0}^{r_1} e^{- 2\tilde{\phi}_{m}x^\alpha}\varphi_I(x)xdx,
\end{align}
where $\tilde{\phi}_m=\max\{\phi_m,\phi_{m'}\}$ and $\phi_m=\frac{\epsilon_m}{\rho \alpha^2_{m}}$.
If  $\rho_I$ is fixed and the transmit SNR $\rho$ approaches infinity, the outage probability can be approximated as follows:
\begin{align}
\tilde{\mathrm{P}}_{m} &\approx \frac{\tilde{\phi}_m(2+\tilde{\theta}_{m'})}{r_1^2(\alpha+2) }\left(\alpha r_0^{\alpha+2}+2r_1^{\alpha+2}\right),
\end{align}
where $\tilde{\theta}_{m'}$ was defined in Lemma \ref{lemma1}.
\end{lemma}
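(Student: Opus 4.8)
The plan is to reduce $\tilde{\mathrm{P}}_m$ to a single threshold event on $|h_m|^2$ and then to reuse the machinery already developed for Lemma~\ref{lemma1}. The first and conceptually most important step is to rewrite the two-term expression in \eqref{outage mm} as the probability of a \emph{union} of two genuine outage events. Write $A$ for the event that user $m$ fails to decode $s_{m'}$ (the first modified SINR falling below $\epsilon_{m'}=2^{R_{m'}}-1$) and $C$ for the event that the post-SIC SINR for $s_m$ falls below $\epsilon_m=2^{R_m}-1$. Then the second term of \eqref{outage mm} is exactly $\mathrm{P}(C\cap A^c)$, so that $\tilde{\mathrm{P}}_m=\mathrm{P}(A)+\mathrm{P}(C\cap A^c)=\mathrm{P}(A\cup C)$. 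This reformulation is what legitimizes the $\delta\ge N$ upper bound: the original sum cannot be bounded term by term, because the joint event carries the \emph{anti}-outage condition $\{\log(1+\cdots)>R_{m'}\}$, which moves the wrong way when the denominator is enlarged. Once written as a union of two honest outage events, however, replacing $|\mathbf{v}_m|^2\le 2$ and $|\mathbf{v}_m^H\mathbf{1}_N|^2\le N|\mathbf{v}_m|^2\le 2\delta$ (valid for $\delta\ge N$, exactly as in Lemma~\ref{lemma1}) only enlarges each denominator, hence enlarges both $A$ and $C$ and therefore their union, giving $\tilde{\mathrm{P}}_m\ge\mathrm{P}^o_m$.

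Next I would collapse the union into one threshold. Solving each SINR inequality for $|h_m|^2$, the event $A$ becomes $\{|h_m|^2<\phi_{m'}(2+2\delta I_m)\}$ whenever $\alpha_{m'}^2>\alpha_m^2\epsilon_{m'}$ (and $A$ is certain, forcing $\tilde{\mathrm{P}}_m=1$, in the complementary case, which disposes of the degenerate statement), while $C$ becomes $\{|h_m|^2<\phi_m(2+2\delta I_m)\}$ with $\phi_m=\frac{\epsilon_m}{\rho\alpha_m^2}$. Since both are lower-threshold events on the \emph{same} random variable $|h_m|^2$, their union is simply $\{|h_m|^2<\tilde{\phi}_m(2+2\delta I_m)\}$ with $\tilde{\phi}_m=\max\{\phi_m,\phi_{m'}\}$; this is precisely where the maximum in the statement originates.

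The remaining steps mirror Lemma~\ref{lemma1}. Conditioning on the distance $d_m$ and the interference $I_m$, and using that $|h_m|^2=\tilde g_m/L(d_m)$ with $\tilde g_m=1/(\mathbf{G}^{-1}\mathbf{G}^{-H})_{m,m}$ exponentially distributed (as established in the proof of Lemma~\ref{lemma1} in Appendix~A), the conditional non-outage probability is $e^{-\tilde{\phi}_m L(d_m)(2+2\delta I_m)}$. Averaging over the PPP $\Psi_I$ through its probability generating functional produces the same Laplace-transform factor $\varphi_I(\cdot)$ as in Lemma~\ref{lemma1}, now with $\phi_{m'}$ replaced by $\tilde{\phi}_m$. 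Finally I average over $d_m$, which is uniform on $\mathcal{D}_1$ with density $\frac{2x}{r_1^2}$; because $r_1\ge r_0$, the path loss equals the constant $r_0^\alpha$ on $[0,r_0]$ and $x^\alpha$ on $[r_0,r_1]$, so the integral splits at $r_0$ and yields exactly the two integral terms of the exact expression.

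For the high-SNR approximation I would use $\tilde{\phi}_m=O(1/\rho)\to 0$ and linearize: $e^{-2\tilde{\phi}_m L(x)}\approx 1-2\tilde{\phi}_m L(x)$, together with the small-argument expansion of $\varphi_I$ from Lemma~\ref{lemma1} contributing the interference term $\tilde{\theta}_{m'}$, so that the survival integrand becomes $1-\tilde{\phi}_m(2+\tilde{\theta}_{m'})L(x)$ to leading order. Integrating $\frac{2}{r_1^2}\int_0^{r_0}(\cdots)x\,dx$ with $L=r_0^\alpha$ and $\frac{2}{r_1^2}\int_{r_0}^{r_1}(\cdots)x\,dx$ with $L=x^\alpha$ gives the two contributions $\propto r_0^{\alpha+2}$ and $\propto r_1^{\alpha+2}$ that combine into $\frac{\tilde{\phi}_m(2+\tilde{\theta}_{m'})}{r_1^2(\alpha+2)}(\alpha r_0^{\alpha+2}+2r_1^{\alpha+2})$. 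I expect the main obstacle to be the event-combination/monotonicity argument of the first two paragraphs rather than any of the later integrals: securing the $\delta\ge N$ bound requires the union reformulation (the termwise bound fails), and correctly tracking that the two thresholds merge through a \emph{maximum}, not a sum or a minimum, is the one place where a direction error would silently corrupt the whole result. The integrals and the interference Laplace transform are then inherited almost verbatim from Lemma~\ref{lemma1}.
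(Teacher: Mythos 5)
Your proposal is correct and follows essentially the same route as the paper's Appendix B: collapse the two terms of \eqref{outage mm} into the single event $\{|h_m|^2<2\tilde{\phi}_m(1+\delta I_m)\}$ with $\tilde{\phi}_m=\max\{\phi_m,\phi_{m'}\}$, justify the $\delta\geq N$ bound on this combined form, and then reuse the exponential small-scale gain, the PPP Laplace functional, and the uniform average over $\mathcal{D}_1$ split at $r_0$ exactly as in Lemma \ref{lemma1}. Your high-SNR linearization likewise reproduces the paper's computation, so no gaps.
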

\begin{proof}
Please refer to Appendix B.
\end{proof}

%

\subsection{Cognitive Radio Power Allocation }\label{cognitive section}
In this section, a cognitive radio inspired power allocation strategy  is studied. In particular, assume  that user $m'$ is viewed as a primary user in a cognitive ratio network. With orthogonal multiple access, the bandwidth resource occupied by  user $m'$ cannot be reused by   other users, despite its poor channel conditions. In contrast, with NOMA, one additional user, i.e.,  user $m$, can be served simultaneously, under the condition that the QoS requirements of  user $m'$ can still be met.

 In particular, assume  that user $m'$ needs to achieve a target data rate of $R_{m'}$, which means that the power allocation coefficients of NOMA need to satisfy the following constraint
\begin{align}\label{constraint cg}
\frac{\rho|h_{m'}|^2\alpha_{m'}^2}{\rho|h_{m'}|^2
 \alpha_{m}^2+|\mathbf{v}_{m'}|^2+|\mathbf{v}_{m'}^H\mathbf{1}_N|^2I_{m'}}>\epsilon_{m'},
 \end{align}
 which leads to the following choice for $\alpha_m$
 \begin{align}\label{choice 1 cg}
 \alpha_m^2 = \max\left(0, \frac{\rho |h_{m'}|^2-\epsilon_{m'}(|\mathbf{v}_{m'}|^2+|\mathbf{v}_{m'}^H\mathbf{1}_N|^2I_{m'})}{(1+\epsilon_{m'})\rho |h_{m'}|^2}\right).
 \end{align}
 It is straightforward  to show that  $\frac{\rho |h_{m'}|^2-\epsilon_{m'}(|\mathbf{v}_{m'}|^2+|\mathbf{v}_{m'}^H\mathbf{1}_N|^2I_{m'})}{(1+\epsilon_{m'})\rho |h_{m'}|^2}$ is always less than one.

An outage   at user $m'$ means here that all   power is allocated to  user $m'$, but outage still occurs. As a result, the outage probability of  user $m'$ is exactly the same as that in conventional orthogonal MA systems. Therefore, in this section, we only focus on the outage  probability of  user $m$  which can be expressed as follows:
 \begin{align}
\mathrm{P}^o_{m}=&\mathrm{P}\left(|h_m|^2<\max\left\{ \phi_{m'} (|\mathbf{v}_m|^2+|\mathbf{v}_m^H\mathbf{1}_N|^2I_{m}), \phi_m ( |\mathbf{v}_m|^2+|\mathbf{v}_m^H\mathbf{1}_N|^2I_{m})\right\}\right),
 \end{align}
 if $ \alpha_{m'}^2 > \alpha_{m}^2 \epsilon_{m'}$, otherwise outage always occurs. It can be verified that $\alpha_{m'}^2 \leq \alpha_{m}^2 \epsilon_{m'}$ is equivalent to $\alpha_m=0$, in the context of  cognitive radio power allocation.

 Analyzing this outage probability is very difficult due to the following two reasons. First, $h_m$ and $\mathbf{v}_m$ are correlated, and second,  the users experience different but correlated co-channel interference, i.e., $I_m\neq I_{m'}$. Therefore, in this subsection, we only focus on the case without co-channel interference, i.e., $\rho_I=0$. In particular,  we focus on the following outage probability
  \begin{align}
\tilde{\mathrm{P}}_{m}=&\mathrm{P}\left(|h_m|^2<2\max\left\{ \bar{\phi}_{m'}, \bar{\phi}_m \right\}\right),
 \end{align}
 where $\bar{\phi}_m=\frac{\epsilon_m}{\rho \bar{\alpha}^2_{m}}$,   $\bar{\phi}_{m'}= \frac{\epsilon_{m'}}{\rho\bar{\alpha}_{m'}^2 - \rho\bar{\alpha}_{m}^2 \epsilon_{m'}}$,  and \begin{align}
  \bar{\alpha}_m^2 = \max\left(0, \frac{\rho |h_{m'}|^2-2\epsilon_{m'}}{(1+\epsilon_{m'})\rho |h_{m'}|^2}\right).
 \end{align}
Similarly to the case with fixed power allocation,  the outage probability $\tilde{\mathrm{P}}_{m}$ tightly bounds  $ {\mathrm{P}}^o_{m}$. The following lemma provides the expression for the outage probability $\tilde{\mathrm{P}}_{m}$.
\begin{lemma}\label{lemma 4}
When $\rho_I=0$, the outage probability can be expressed as follows:
\begin{align}\label{eq lemma 3}
\tilde{\mathrm{P}}_{m} &= 1 - \Upsilon_1\left(\frac{2\epsilon_{m'}}{\rho}\right)\Upsilon_2
\left(\frac{2\epsilon_m(1+\epsilon_{m'}) }{\rho }\right),
\end{align}
where
\begin{align}
\Upsilon_1(y) & =  \frac{1}{  r^2-  r_1^2} \left(  e^{-yr^{\alpha}} r^2 - e^{-yr_1^{\alpha}} r_1^2\right)+\frac{y^{-\frac{2}{\alpha}}}{  r^2-  r_1^2}\left(\gamma\left(\frac{2}{\alpha}+1, yr^{\alpha}\right)-\gamma\left(\frac{2}{\alpha}+1,yr_1^{\alpha}\right)\right) .
\end{align}
and
\begin{align}\nonumber
\Upsilon_2(z) & = \frac{r_0^2e^{- zr_0^\alpha}}{r_1^2}+ \frac{1}{    r_1^2}  \left(  e^{-zr_1^{\alpha}} r_1^2 -  e^{-zr_0^{\alpha}} r_0^2\right) +\frac{z^{-\frac{2}{\alpha}}}{    r_1^2} \left(\gamma\left(\frac{2}{\alpha}+1, zr_1^{\alpha}\right)-\gamma\left(\frac{2}{\alpha}+1, zr_0^{\alpha}\right)\right) .
\end{align}
At high SNR, the outage probability can be approximated as follows:
\begin{align}
\tilde{\mathrm{P}}_{m} &\approx  \frac{4\epsilon_{m'}}{ \rho(2+\alpha)( r^2-  r_1^2)} \left(  r^{\alpha+2} -r_1^{\alpha+2} \right)  +  \frac{2r_0^{2+\alpha} \epsilon_m(1+\epsilon_{m'})}{\rho r_1^2}  +\frac{4\epsilon_m(1+\epsilon_{m'}) }{\rho  (2+\alpha)  r_1^2} \left(  r_1^{\alpha+2} -  r_0^{\alpha+2}\right).
\end{align}
\end{lemma}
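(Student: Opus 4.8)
The plan is to collapse the outage event into a single inequality involving only the shared small-scale fading quantity $(\mathbf{G}^{-1}\mathbf{G}^{-H})_{m,m}$ and the two user distances, and then to average over the fading and over the two independent distances. First I would substitute the cognitive-radio coefficient $\bar{\alpha}_m^2$ (together with $\bar{\alpha}_{m'}^2=1-\bar{\alpha}_m^2$) into $\bar{\phi}_m$ and $\bar{\phi}_{m'}$, working in the non-trivial regime $\bar{\alpha}_m>0$, i.e. $\rho|h_{m'}|^2>2\epsilon_{m'}$. A direct calculation then yields the key cancellation $2\bar{\phi}_{m'}=|h_{m'}|^2$ and $2\bar{\phi}_m=\frac{2\epsilon_m(1+\epsilon_{m'})|h_{m'}|^2}{\rho|h_{m'}|^2-2\epsilon_{m'}}$. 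Since the paired users are drawn from $\mathcal{D}_1$ and $\mathcal{D}_2$ with $d_m<d_{m'}$, and since $|h_m|^2=\frac{1}{L(d_m)(\mathbf{G}^{-1}\mathbf{G}^{-H})_{m,m}}$ and $|h_{m'}|^2=\frac{1}{L(d_{m'})(\mathbf{G}^{-1}\mathbf{G}^{-H})_{m,m}}$ share the same fading term, one always has $|h_m|^2>|h_{m'}|^2=2\bar{\phi}_{m'}$; hence the SIC branch $|h_m|^2<2\bar{\phi}_{m'}$ can never be the cause of an outage and only the $\bar{\phi}_m$ branch of the maximum survives.

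Writing $t=(\mathbf{G}^{-1}\mathbf{G}^{-H})_{m,m}$, $a=L(d_m)$ and $b=L(d_{m'})$, I would then merge the two remaining outage sources: the ``always-outage'' region $\bar{\alpha}_m=0$, which is exactly $t\geq\frac{\rho}{2\epsilon_{m'}b}$, and the event $|h_m|^2<2\bar{\phi}_m$ inside the regime $\bar{\alpha}_m>0$. Cross-multiplying the latter (valid because $\rho-2\epsilon_{m'}bt>0$ there) rewrites it as $t>\frac{\rho}{2[\epsilon_{m'}b+\epsilon_m(1+\epsilon_{m'})a]}$; since this threshold is smaller than $\frac{\rho}{2\epsilon_{m'}b}$, the union of the two regions collapses to the single clean event
\[\tilde{\mathrm{P}}_m=\mathrm{P}\!\left(t>\tfrac{\rho}{2[\epsilon_{m'}L(d_{m'})+\epsilon_m(1+\epsilon_{m'})L(d_m)]}\right).\]

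Next I would invoke the distributional fact, already used in the proof of Lemma~\ref{lemma1}, that $t^{-1}=(\mathbf{G}^{-1}\mathbf{G}^{-H})_{m,m}^{-1}$ is exponentially distributed with unit mean. Conditioning on the distances, $\mathrm{P}(t>c\mid d_m,d_{m'})=\mathrm{P}(t^{-1}<1/c)=1-e^{-1/c}$, and because $1/c=\frac{2\epsilon_{m'}}{\rho}L(d_{m'})+\frac{2\epsilon_m(1+\epsilon_{m'})}{\rho}L(d_m)$ the conditional outage factorizes into a $d_{m'}$-term and a $d_m$-term. Using independence of the two users and their location densities ($d_{m'}$ with density $\frac{2x}{r^2-r_1^2}$ on the ring, $d_m$ with density $\frac{2x}{r_1^2}$ on the inner disc) gives $\tilde{\mathrm{P}}_m=1-\mathcal{E}_{d_{m'}}\{e^{-\frac{2\epsilon_{m'}}{\rho}L(d_{m'})}\}\,\mathcal{E}_{d_m}\{e^{-\frac{2\epsilon_m(1+\epsilon_{m'})}{\rho}L(d_m)}\}$, which I identify as $\Upsilon_1(\frac{2\epsilon_{m'}}{\rho})\Upsilon_2(\frac{2\epsilon_m(1+\epsilon_{m'})}{\rho})$. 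Evaluating $\int_{r_1}^r 2xe^{-yx^\alpha}dx$ by parts and substituting $w=yx^\alpha$ produces the $\gamma(\frac{2}{\alpha}+1,\cdot)$ terms of $\Upsilon_1$, while splitting $\Upsilon_2$ at $r_0$ (to account for the floor $L(d)=r_0^\alpha$ on $d\le r_0$) yields its first term plus the analogous incomplete-Gamma expression. The high-SNR result then follows by the expansion $e^{-x}\approx1-x$, giving $\tilde{\mathrm{P}}_m\approx \frac{2\epsilon_{m'}}{\rho}\mathcal{E}\{L(d_{m'})\}+\frac{2\epsilon_m(1+\epsilon_{m'})}{\rho}\mathcal{E}\{L(d_m)\}$ with $\mathcal{E}\{L(d_{m'})\}=\frac{2(r^{\alpha+2}-r_1^{\alpha+2})}{(\alpha+2)(r^2-r_1^2)}$ and $\mathcal{E}\{L(d_m)\}=\frac{r_0^{\alpha+2}}{r_1^2}+\frac{2(r_1^{\alpha+2}-r_0^{\alpha+2})}{(\alpha+2)r_1^2}$.

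The main obstacle I anticipate is the very first reduction: the coefficient $\bar{\alpha}_m$ is itself a random variable coupling $|h_m|^2$, $|h_{m'}|^2$ and the target rates, so the outage region is a priori a complicated set rather than a simple level set of the fading. The two facts that tame it — the cancellation $2\bar{\phi}_{m'}=|h_{m'}|^2$ combined with the ordering $|h_m|^2>|h_{m'}|^2$ (which eliminates the SIC branch), and the collapse of the ``$\bar{\alpha}_m=0$'' region together with the ``$\bar{\phi}_m$'' region into a single threshold on $(\mathbf{G}^{-1}\mathbf{G}^{-H})_{m,m}$ — are the crux of the argument; once they are established the remaining steps are routine averaging and integration.
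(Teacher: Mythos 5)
Your proof is correct and follows essentially the same route as the paper's Appendix C: both hinge on the vanishing of the SIC-failure event via $|h_m|^2\geq|h_{m'}|^2$ (shared small-scale fading with $d_m<d_{m'}$), the exponential distribution of $1/(\mathbf{G}^{-1}\mathbf{G}^{-H})_{m,m}$, and the factorization of the conditional outage over the two independent user locations into $\Upsilon_1\cdot\Upsilon_2$. The only difference is cosmetic: the paper tallies three events $E_1,E_2,E_3$ separately and lets $\mathrm{P}(E_1)+\mathrm{P}(E_3)$ telescope, whereas you merge the $\bar{\alpha}_m=0$ region and the decoding-failure region into a single threshold on the fading variable before averaging, which arrives at $1-\Upsilon_1\Upsilon_2$ directly.
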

\begin{proof}
Please refer to Appendix C.
\end{proof}
{\it Remark 1}: By using the above lemma, it is straightforward to show that a diversity gain of one is still achievable at  user $m$ (i.e., there is no error floor), and it is important to point out that this is achieved when  user $m'$ experiences the same outage performance as if it solely uses the channel. Therefore, by using the proposed cognitive radio NOMA, one additional user, user $m$, is introduced into the system to share the spectrum with the primary user, user $m'$,  without causing any performance degradation at user $m'$.

{\it Remark 2}: For the above cognitive radio NOMA scheme, it was assumed that the message for  user $m'$ is decoded first at both receivers. Nevertheless, different SIC decoding strategies can be used, and their impact can be obtained in a straightforward manner  from the analysis in the next section, where  more complicated uplink transmission schemes are studied. It is worth pointing out that  $\alpha_m^2$ in \eqref{choice 1 cg} is always  smaller than $\frac{1}{2}$, for $ R_{m'}\geq 1$. For example, when $\alpha^2_m=0$, the inequality $ \alpha_m^2 -\frac{1}{2}<0$ holds obviously. When $\alpha_m^2>0$,
\begin{align}
 \alpha_m^2 -\frac{1}{2}&=  \frac{\rho |h_{m'}|^2-\epsilon_{m'}(|\mathbf{v}_m|^2+|\mathbf{v}_m^H\mathbf{1}_N|^2I_{m'})}{(1+\epsilon_{m'})\rho |h_{m'}|^2} -\frac{1}{2}=  \frac{\rho |h_{m'}|^2(1-\epsilon_{m'})-2\epsilon_{m'}(|\mathbf{v}_m|^2+
 |\mathbf{v}_m^H\mathbf{1}_N|^2I_{m'})}
 {2(1+\epsilon_{m'})\rho |h_{m'}|^2}  \leq 0,
\end{align}
if $R_{m'}\geq 1$.

\subsection{Selection of the User Detection Vectors} \label{section extension}
Previously, a random choice of $\mathbf{v}_m$ and $\mathbf{v}_{m'}$ has been used and analyzed. In the case of $2N-M>1$, there is more than one possible choice based on the null space, $\mathbf{U}_m$, defined in \eqref{dtection 1}. In this section, we study how to utilize these additional degrees of freedom and  analyze their  impact on the outage probability.

Finding the optimal choice for $\mathbf{v}_m$ and $\mathbf{v}_{m'}$ is challenging, since the choice of the detection vectors for one user pair has an impact on those of the other user pairs. For example, the choice of  $\mathbf{v}_m$ and $\mathbf{v}_{m'}$ will affect   the $m$-th column of the effective fading matrix $\mathbf{G}$. Recall that the data rates of the users from the $i$-th pair is a function of $\frac{1}{( {\mathbf{G}}^{-1} {\mathbf{G}}^{-H})_{i,i}}$. Therefore, the detection vector chosen by the $m$-th user pair will also affect the  data rates of the users in the $i$-th pair, $m\neq i$.

In order to avoid this tangled effect, a simple algorithm for     detection vector selection is proposed in Table \ref{alg:stuff}. The following lemma shows the diversity gain achieved by the proposed selection algorithm.

\begin{algorithm}[t]
\caption{The selection of the detection vectors $\mathbf{v}_m$ and $\mathbf{v}_{m'}$}
\label{alg:stuff}
\begin{algorithmic}[1]
\FOR{$i=1$ to $(2N-M)$}
\STATE Set $\mathbf{x}_{m,i}=\begin{bmatrix} \mathbf{0}_{1\times (i-1)} &1&\mathbf{0}_{1\times (M-i)} \end{bmatrix}^H$,   $\forall m\in\{1,\cdots,M\}$.
\STATE Choose the detection vector as $\begin{bmatrix}  \mathbf{v}_{m,i}^H &\mathbf{v}_{m',i}^H\end{bmatrix}^H=\mathbf{U}_m\mathbf{x}_{m,i}$ and determine   vector $ {\mathbf{g}}_{m,i}=\mathbf{G}_m^H\mathbf{v}_{m,i}$.
\STATE Construct the effective small scale fading matrix, denoted by $\bar{\mathbf{G}}_i$, by using $\mathbf{g}_{m,i}$, i.e., $\bar{\mathbf{G}}_i=\begin{bmatrix}\mathbf{g}_{1,i}& \cdots&\mathbf{g}_{M,i}  \end{bmatrix}^H$
\STATE Find the effective small scale fading gain for each user pair, $\gamma_{m,i}=\frac{1}{(\bar{\mathbf{G}}_i^{-1}\bar{\mathbf{G}}_i^{-H})_{m,m}}$.
\STATE Find the smallest fading gain, $\gamma_{\min,i}=\min\{\gamma_{1,i},\cdots,\gamma_{M,i}\}$.
\ENDFOR
\STATE Find the index $i$ which maximizes the smallest fading gain, $i^*=\underset{i\in{\{1, \cdots, 2N-M\}}}{\arg} \max ~ \gamma_{\min,i}$ .
\end{algorithmic}
\end{algorithm}\vspace{-1em}

\begin{lemma}\label{lemma 5}
Consider the use of a fixed set of power allocation coefficients. If  $ \alpha_{m'}^2 \leq \alpha_{m}^2 \epsilon_{m'}$,  the probability $\tilde{\mathrm{P}}_{m'}=1$, otherwise the use of the algorithm proposed in Table \ref{alg:stuff}   ensures that a diversity gain of $(2N-M)$ is achieved.
\end{lemma}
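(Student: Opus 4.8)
The plan is to reduce the outage event of user $m'$ to a small-ball event for the effective fading gain of the configuration selected by the algorithm, then to exploit the selection rule together with the (conditional) independence afforded by the orthonormal columns of $\mathbf{U}_m$ in order to show that the joint failure probability decays like the $(2N-M)$-th power of a single $O(1/\rho)$ factor. First I would dispose of the degenerate case: when $\alpha_{m'}^2\le\alpha_m^2\epsilon_{m'}$ the argument used in Lemma~\ref{lemma1} applies verbatim (the target SINR cannot be met for any channel realization), so $\tilde{\mathrm{P}}_{m'}=1$. Assuming henceforth $\alpha_{m'}^2>\alpha_m^2\epsilon_{m'}$, so that $\phi_{m'}>0$, I would condition on the user locations and on $I_{m'}$ and, following exactly the manipulation in the proof of Lemma~\ref{lemma1}, write the outage event as $\{|h_{m'}|^2<\phi_{m'}(2+2\delta I_{m'})\}$. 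Since under the selected index $|h_{m'}|^2=\gamma_{m,i^*}/L(d_{m'})$, this event is exactly $\{\gamma_{m,i^*}<\tau\}$ with $\tau\triangleq L(d_{m'})\phi_{m'}(2+2\delta I_{m'})$, and because $d_{m'}\le r$, $I_{m'}$ has a fixed distribution, and $\phi_{m'}=O(1/\rho)$, I have $\tau=O(1/\rho)$ uniformly.

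Next I would invoke the selection property of Table~\ref{alg:stuff}. Because $i^*=\arg\max_i\gamma_{\min,i}$ with $\gamma_{\min,i}=\min_m\gamma_{m,i}$, the chosen gain satisfies $\gamma_{m,i^*}\ge\gamma_{\min,i^*}=\max_i\gamma_{\min,i}$. Hence the outage event is contained in the intersection over all candidate indices, and conditionally on $\tau$,
\begin{align}
\tilde{\mathrm{P}}_{m'}\;\le\;\mathrm{E}\!\left[\mathrm{P}\!\left(\bigcap_{i=1}^{2N-M}\{\gamma_{\min,i}<\tau\}\;\Big|\;\tau\right)\right].
\end{align}
The remaining task is to show that the inner probability is $O(\tau^{2N-M})$.

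For the crux I would argue a factorization. The single-index, single-pair cumulative distribution is linear near the origin, which is precisely the diversity-one behaviour already exhibited in Lemma~\ref{lemma1}; thus $\mathrm{P}(\gamma_{m,i}<\tau)=O(\tau)$, and a union bound over the $M$ pairs gives $\mathrm{P}(\gamma_{\min,i}<\tau)\le\sum_m\mathrm{P}(\gamma_{m,i}<\tau)=O(\tau)$. It then remains to establish that the $(2N-M)$ events indexed by $i$ decouple, so that the joint probability factors into $(2N-M)$ such $O(\tau)$ terms, yielding $O(\tau^{2N-M})$. This decoupling is built into the algorithm: the $\mathbf{x}_{m,i}$ are the standard basis vectors, so the stacked detection vectors $[\mathbf{v}_{m,i}^H\ \mathbf{v}_{m',i}^H]^H$ are the mutually orthonormal columns of $\mathbf{U}_m$, and the resulting effective channels $\mathbf{g}_{m,i}$ are obtained by projecting the Gaussian channel matrices onto statistically independent directions. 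Combining this with the bound above gives $\tilde{\mathrm{P}}_{m'}\le C\,\mathrm{E}[\tau^{2N-M}]$; since $d_{m'}$ is uniform on the ring, $\phi_{m'}=O(1/\rho)$, and the moments of the shot-noise term $I_{m'}$ are finite, $\mathrm{E}[\tau^{2N-M}]=O(\rho^{-(2N-M)})$, so the achievable diversity order is $(2N-M)$.

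The hard part will be the factorization step. The events $\{\gamma_{\min,i}<\tau\}$ are coupled both across the index $i$ (they are all constructed from the same $\mathbf{G}_m,\mathbf{G}_{m'}$) and across pairs (through the global min--max selection), so the reduction to a product of $(2N-M)$ linear-in-$\tau$ factors is not immediate. The key is to leverage the orthonormality of the columns of $\mathbf{U}_m$ to obtain genuine conditional independence of the $i$-indexed effective channels, and to verify that each per-index small-ball probability is $\Theta(\tau)$ rather than vanishing faster, so that every candidate index contributes exactly one order of $\rho^{-1}$ and the exponent is neither over- nor under-counted.
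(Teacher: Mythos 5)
Your proposal matches the paper's proof in all essential respects: the containment of the outage event in $\bigcap_{i}\{\gamma_{\min,i}<\tau\}$ via the max--min selection rule, the factorization across the index $i$ justified by the independence of the $\mathbf{g}_{m,i}$ (Proposition 1 of the cited reference, resting on the orthonormal columns of $\mathbf{U}_m$), and the union bound over the $M$ pairs costing only a constant factor $M^{2N-M}$. The only difference is in the final asymptotic evaluation, where you bound each factor by $1-e^{-\tau}\le\tau$ and appeal to finiteness of the shot-noise moments, whereas the paper expands $\bigl(1-e^{-x}\bigr)^{2N-M}$ binomially and uses the identity $\sum_{i=0}^{2N-M}\binom{2N-M}{i}(-1)^i i^j=0$ for $j\le 2N-M-1$ to extract the same $\rho^{-(2N-M)}$ leading order.
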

\begin{proof}
Please refer to Appendix D.
\end{proof}
As can be seen from Lemma \ref{lemma 5}, the use of the proposed selection algorithm can increase the diversity gain from $1$ to $(2N-M)$, which is a significant improvment compared to the scheme in \cite{Zhiguo_mimoconoma}. Consider a scenario with $N=M$ as an example. The proposed scheme can achieve a diversity gain of $M$, whereas the one in \cite{Zhiguo_mimoconoma} can only achieve a diversity gain of $1$, for an unordered user. Note, however,  that the scheme in \cite{Zhiguo_mimoconoma} does not require CSI at the transmitter.

\section{Performance Analysis of MIMO-NOMA Uplink Transmission}\label{section uplink}
Because of the symmetry between the uplink and downlink system models of Section \ref{section system model},  in this section,  we only focus on the difference between two scenarios.
 One important  observation for uplink NOMA is that the sum rate is always the same, no matter which decoding order is used.  Therefore, in this section, we first analyze the outage probability with respect to the sum rate for a fixed power allocation. The use of a randomly selected $\mathbf{x}_m$ is considered  in order to obtain  tractable analytical results.

\subsection{Fixed Power Allocation}
Recall that, if the message from  user $m$ is decoded first, the base station can correctly decode the message with  rate
\begin{align}
  R_{m,BS,I} = \log\left(1+\frac{\rho|h_m|^2\alpha^2_m}{\rho|h_{m'}|^2\alpha^2_{m'}+I_{BS,m}+1}\right),
  \end{align}
where    the interference power is given by
  \begin{align}
 I_{BS,m}= \underset{j\in\Psi_I}{\sum}\frac{\rho_I|\mathbf{p}_m^H
 \mathbf{1}_M|^2}{L\left(d_{I_j,BS}\right)}.
  \end{align}

After subtracting the message from  user $m$, the base station can decode the message from user $m'$ correctly with the following rate
  \begin{align}
  R_{m',BS,I} = \log \left(1+\frac{\rho|h_{m'}|^2\alpha^2_{m'}}{I_{BS,m}+1}\right).
  \end{align}
Therefore, the sum rate achieved by NOMA in the $m$-th sub-channel is given by
  \begin{align}
  R_s &=  R_{m,BS,I}+ R_{m',BS,I} =\log\left(1+\frac{\rho |h|_m^2\alpha_m^2+\rho |h|_{m'}^2\alpha_{m'}^2}{I_{BS,m}+1}\right).
  \end{align}
It is straightforward to verify that the exactly same sum rate is achieved if the message from  user $m'$ is decoded first. Therefore, the outage probability for the sum rate can be expressed as follows:
  \begin{align}
  \mathrm{P}_s = \mathrm{P}\left(R_s<R_m+R_{m'}\right).
  \end{align}
Note that the term for the interference power contains $|\mathbf{p}_m^H\mathbf{1}_M|^2$ which makes the calculation very difficult. Since $|\mathbf{p}_m^H\mathbf{1}_M|^2\leq M |\mathbf{p}_m^H|^2=M$,   we focus on the following modified expression of  the outage probability
  \begin{align}
  \tilde{\mathrm{P}}_s = \mathrm{P}\left(\log\left(1+\frac{\rho |h|_m^2\alpha_m^2+\rho |h|_{m'}^2\alpha_{m'}^2}{\delta I_m+1}\right)<R_m+R_{m'}\right),
  \end{align}
  where $I_{m}= \underset{j\in\Psi_I}{\sum}\frac{\rho_I}{L\left(d_{I_j,BS}^\alpha\right)}$. Similarly to the downlink case, $\tilde{\mathrm{P}}_s$ provides an upper bound on $ {\mathrm{P}}_s$ for $\delta  \geq M$. In the simulation section, we will demonstrate that $\tilde{\mathrm{P}}_s $ with a choice of $\delta=1$ provides a tight approximation to ${\mathrm{P}}_s $.

Define the small scale fading gain as $x\triangleq \frac{1}{ (\mathbf{G}^{-1}\mathbf{G}^{-H})_{m,m}} $. The  sum rate outage probability can be expressed as follows
  \begin{align}
 \tilde{\mathrm{P}}_s &= \mathrm{P}\left( \frac{\rho \frac{x}{L(d_m)}\alpha_m^2+\rho \frac{x}{L(d_{m'})}\alpha_{m'}^2}{\delta I_m+1} <\epsilon\right) = \mathrm{P}\left( x<\frac{\epsilon(\delta I_m+1)} { \frac{\rho \alpha_m^2}{L(d_m)}+ \frac{\rho \alpha_{m'}^2}{L(d_{m'})}}\right),
  \end{align}
  where $\epsilon=2^{R_m+R_{m'}}-1$.
Following the same steps  as in the proof of Lemma \ref{lemma1}, the above probability can be expressed as follows:
\begin{align}\label{uplink1}
\tilde{\mathrm{P}}_s&=1-  \frac{4}{r_1^2(  r^2-  r_1^2)} \int_{r_1}^r\int_{0}^{r_1} e^{- \zeta(x,y)} e^{-\pi \lambda_I (\rho_I\delta \zeta(x,y))^{\frac{2}{\alpha}}
\gamma\left(\frac{1}{\alpha},\frac{\rho_I\delta \zeta(x,y)}{r_0^\alpha}\right)}xdxydy,
\end{align}
where  $\zeta(d_m,d_{m'})=\frac{\epsilon}{ \frac{\rho \alpha_m^2}{L(d_m)}+ \frac{\rho \alpha_{m'}^2}{L(d_{m'})}}$.

In order to obtain some insights regarding  the above probability, we again consider the case that $\rho$ tends to infinity and $\rho_I$ is fixed.
Since both $d_m$ and $d_{m'}$ are bounded, $\zeta(d_m,d_{m'})$ approaches zero at high SNR. Therefore the above probability can be approximated as follows:
\begin{align}
\tilde{\mathrm{P}}_s&\approx 1-  \frac{4}{r_1^2(  r^2-  r_1^2)} \int_{r_1}^r\int_{0}^{r_1} e^{-  \zeta(x,y)} e^{-\pi \lambda_I (\rho_I\delta \zeta(x,y))^{\frac{2}{\alpha}}
\alpha\left(\frac{\rho_I\delta \zeta(x,y)}{r_0^\alpha}\right)^{\frac{1}{\alpha}}}xdxydy
\\ \nonumber &\approx 1-  \frac{4}{r_1^2(  r^2-  r_1^2)} \int_{r_1}^r\int_{0}^{r_1}  e^{- \zeta(x,y)
 \left(\frac{\pi \lambda_I\alpha\rho_I}{r_0}+1 \right)}xdxydy.
\end{align}
With some algebraic manipulations, the above probability can be simplified as follows:
\begin{align}
\tilde{\mathrm{P}}_s &\approx 1-  \frac{4}{r_1^2(  r^2-  r_1^2)} \int_{r_1}^r\int_{0}^{r_1} \left(1- \zeta(x,y)
 \left(\frac{\pi \lambda_I\delta \alpha\rho_I}{r_0}+1 \right)\right)xdxydy\\ \nonumber
 &\approx  \frac{4\left(\frac{\pi \lambda_I\delta \alpha\rho_I}{r_0}+1 \right)}{r_1^2(  r^2-  r_1^2)} \int_{r_1}^r\int_{0}^{r_1}  \zeta(x,y)
 xdxydy.
\end{align}
Therefore, the outage probability can be approximated as follows:
\begin{align}\label{uplink 2}
\tilde{\mathrm{P}}_{s} &\approx   \frac{4\xi\epsilon\left(\frac{\pi \lambda_I\delta \alpha\rho_I}{r_0}+1 \right)}{\rho r_1^2(  r^2-  r_1^2)} \sim \frac{1}{\rho},
\end{align}
where $\xi=\int_{r_1}^r\int_{0}^{r_1}  \frac{xy}{ \frac{  \alpha_m^2}{L(x)}+ \frac{  \alpha_{m'}^2}{L(y)}}
 dxdy$ is a constant and not related to the SNR. Hence, a diversity gain of $1$ is   achievable for the sum rate.

\subsection{Cognitive Radio Power Allocation}The design of  cognitive radio NOMA for uplink transmission is more complicated, as explained  in the following. To simplify the illustration,  we omit the interference term in this section, i.e., $\rho_I=0$. For downlink transmission, $\alpha_m^2<\frac{1}{2}$ was sufficient to decide the SIC decoding order. However, there are more uncertainties in the uplink case, since   $\alpha_{m'}^2|h_{m'}|^2$ is not necessarily larger than  $\alpha_m^2|h_m|^2$ even if $\alpha_{m'}^2>\frac{1}{2}$. Therefore, the base station can apply two types of decoding strategies, i.e., it may  decode the message from  user $m'$ first, or that of  user $m$ first. These strategies will yield different tradeoffs between the outage performance of the two users, as explained in the following subsections, respectively.

\subsubsection{Case I} When the message from  user $m'$ is decoded first, in order to guarantee the QoS at  user $m'$, we impose the following power constraint for the power allocation coefficients
 \begin{align}
 \log \left(1+ \frac{\rho|h_{m'}|^2\alpha^2_{m'}}{\rho|h_{m}|^2\alpha^2_{m}+1} \right)>R_{m'},
 \end{align}
 which leads to the following choice for $\alpha_{m'}$
 \begin{align}\label{cog 3}
\alpha^2_{m'} =\min\left\{1,\frac{\epsilon_{m'}+\rho\epsilon_{m'}|h_m|^2}{ \rho|h_{m'}|^2+\epsilon_{m'}\rho|h_{m}|^2}\right\}.
 \end{align}

Following the same steps as in the proof of Lemma \ref{lemma2}, the outage probability $\mathrm{P}^I_{m',BS}$ can be evaluated as follows:
\begin{align}\label{case i1}
\mathrm{P}^I_{m',BS} &= \mathrm{P}\left(\frac{\epsilon_{m'}+\rho\epsilon_{m'}|h_m|^2}{ \rho|h_{m'}|^2+\epsilon_{m'}\rho|h_{m}|^2}>1\right)\\ \nonumber &=\mathrm{P}\left(|h_{m'}|^2<\frac{\epsilon_{m'}}{\rho}\right) = 1 -\Upsilon_1\left(\frac{\epsilon_{m'}}{\rho}\right),
\end{align}
and following the same steps as in the proof of Lemma \ref{lemma 4}, the outage probability $\mathrm{P}^I_{m,BS}$ can be evaluated as follows:
 \begin{align}\label{case 12}
\mathrm{P}^I_{m,BS} =& \mathrm{P}\left(\frac{\epsilon_{m'}+\rho\epsilon_{m'}|h_m|^2}{ \rho|h_{m'}|^2+\epsilon_{m'}\rho|h_{m}|^2}>1\right)\\ \nonumber& +\mathrm{P}\left(\frac{\epsilon_{m'}+\rho\epsilon_{m'}|h_m|^2}{ \rho|h_{m'}|^2+\epsilon_{m'}\rho|h_{m}|^2}<1, \log\left(1+\rho |h_m|^2\frac{\rho|h_{m'}|^2-\epsilon_{m'}}{ \rho|h_{m'}|^2+\epsilon_{m'}\rho|h_{m}|^2}\right)<R_m \right)\\  \label{for compare} =&\mathrm{P}\left(|h_{m'}|^2<\frac{\epsilon_{m'}}{\rho}\right)+\mathrm{P}
\left(x>\frac{\epsilon_{m'}L(d_{m'})}{\rho},  x<L(d_m)\frac{\epsilon_m}{\rho} +L(d_{m'})\frac{\epsilon_{m'}}{\rho}\left(1+\epsilon_m\right)\right) \\ \nonumber=&  1 -\Upsilon_1\left(\frac{\epsilon_{m'}\left(1+\epsilon_m\right)}{\rho}\right)
\Upsilon_2\left(\frac{\epsilon_m}{\rho} \right).
\end{align}

\subsubsection{Case II} When the message from  user $m$ is decoded first, in order to guarantee the QoS at user $m'$, we impose the following power constraint for the power allocation coefficients
 \begin{align}
 \log \left(1+ \rho|h_{m'}|^2\alpha^2_{m'}\right)>R_{m'},
 \end{align}
 which leads to the following choice for $\alpha_{m'}$
 \begin{align}\label{cog 2}
\alpha^2_{m'} =\min\left\{1,\frac{\epsilon_{m'}}{ \rho|h_{m'}|^2}\right\}.
 \end{align}

 With this choice,   we can ensure that  the outage probabilities of both users are identical,  i.e., $ {\mathrm{P}}^{II}_{m,BS}= {\mathrm{P}}^{II}_{m',BS}$, as explained in the following. The outage events that occur at user $m'$ can be divided into the following three events
 \begin{itemize}
 \item $\tilde{E}_1$: All the power is allocated to  user $m'$, i.e., $\alpha_{m'}=1$, but the user is still in outage. The NOMA system is degraded to a scenario in which  only  user $m'$ is served.

 \item $\tilde{E}_2$: When $\alpha^2_{m'}<1$,   outage occurs at  user $m$, and   SIC is stopped.

\item $\tilde{E}_3$: When $\alpha^2_{m'}<1$, no outage occurs at  user $m$, but outage occurs at  user $m'$.

 \end{itemize}
 It is straightforward to show that $\tilde{E}_3$ will not happen, i.e., $\mathrm{P}(\tilde{E}_3)=0$. Therefore $ {\mathrm{P}}_{m',BS}=\mathrm{P}(\tilde{E}_1)+\mathrm{P}(\tilde{E}_2)$. On the other hand, there are only two outage events  for decoding  the message from user $m$, which are $\tilde{E}_1$ and $\tilde{E}_2$, respectively. Therefore, the outage probabilities of the two users are the same, $ {\mathrm{P}}_{m,BS}^{II}= {\mathrm{P}}^{II}_{m',BS}$.

Therefore, we only need to study  the outage probability for the message from  user $m$.  With the choice shown in \eqref{cog 2}, the outage probability can be rewritten as follows:
\begin{align}
 {\mathrm{P}}^{II}_{m,BS}  &=\mathrm{P}\left( \frac{\epsilon_{m'}}{ \rho|h_{m'}|^2}>1\right) +\mathrm{P}\left(  \frac{\epsilon_{m'}}{ \rho|h_{m'}|^2}<1,\frac{\rho|h_m|^2\left(1-\frac{\epsilon_{m'}}{ \rho|h_{m'}|^2}\right)}{\rho|h_{m'}|^2\frac{\epsilon_{m'}}{ \rho|h_{m'}|^2}+1}<\epsilon_m\right).
\end{align}
   Therefore, the outage probability can be expressed as follows:
 \begin{align}\label{case II 3}
 {\mathrm{P}}^{II}_{m,BS}
&= \mathrm{P}\left(x< \frac{L(d_{m'})\epsilon_{m'}}{ \rho }\right) +\mathrm{P}\left( \frac{L(d_{m'})\epsilon_{m'}}{ \rho }<x<\frac{ L(d_{m'})\epsilon_{m'}}{ \rho}  +\frac{\epsilon_m \left( \epsilon_{m'} +1\right) L(d_{m})}{\rho}\right).
\end{align}
 By applying the same steps  as in the proof of Lemma \ref{lemma 4} for finding   $\mathrm{P}(E_1)$ and $\mathrm{P}(E_3)$, the outage probability can be obtained as follows:
\begin{align}\label{case ii2}
 {\mathrm{P}}_{m',BS}^{II} = {\mathrm{P}}_{m,BS}^{II}  &=1 - \Upsilon_1\left(\frac{\epsilon_{m'}}{\rho}\right)\Upsilon_2
\left(\frac{\epsilon_m(1+\epsilon_{m'}) }{\rho }\right).
\end{align}

{\it Remark 3:} The two considered cases strike different tradeoffs between the outage performance of the two users.  Case I can ensure that the QoS at user $m'$ is strictly met, and  therefore  user $m'$ will experience  a lower outage probability in Case I, which can be confirmed by the fact that ${\mathrm{P}}_{m',BS}^{I} < {\mathrm{P}}_{m',BS}^{II}$,
due to $\Upsilon_2
\left(\frac{\epsilon_m(1+\epsilon_{m'}) }{\rho }\right)\leq 1$. On the other hand, Case II does not require that the message of  user $m'$ arrives at the base station with a stronger signal strength since the base station will decode  the message from user $m$ first. This is important to  avoid the problem of using  too much power   for  compensating  the huge path loss of the channel of  user $m'$. As a result,
 more power is allcoated to user $m$   compared to Case I, and hence, user $m$ experiences better outage performance in Case II, i.e., ${\mathrm{P}}_{m,BS}^{I} > {\mathrm{P}}_{m,BS}^{II}$. This can be shown by comparing  \eqref{for compare} with \eqref{case II 3} and by considering
 \begin{align}  L(d_{m})\epsilon_{m}   + \epsilon_{m'} \left( \epsilon_{m} +1\right) L(d_{m'}) <  L(d_{m'})\epsilon_{m'}   + \epsilon_m \left( \epsilon_{m'} +1\right) L(d_{m}) .\end{align}
\section{Numerical Studies}
In this section, the performance of the proposed NOMA framework is investigated by using computer simulations. The performance of three benchmark schemes, termed {\it MIMO-OMA without precoding}, {\it MIMO-OMA with precoding}, and {\it MIMO-NOMA without precoding}, is shown  in Fig. \ref{figure2}, in order to better illustrate the performance gain of the proposed framework. The design for the two schemes without precoding can be found in \cite{Zhiguo_mimoconoma}. The MIMO-OMA scheme with precoding serves $M$ users during  each orthogonal channel use, e.g., one time slot, whereas $2M$ users are served simultaneously by the proposed scheme. For MIMO-OMA with precoding, the design of the detection vectors was  obtained by following the algorithm proposed in Table~\ref{alg:stuff}, where the users will carry out antenna selection in each iteration. The framework proposed in this paper is termed {\it SA-MIMO-NOMA}. The path loss exponent is set as $\alpha=3$. The size of $\mathcal{D}_1$ and $\mathcal{D}_2$ is determined by  $r=20$m, and  $r_1=10$m. The parameter for the bounded path loss model is set as $r_0=1$.

\begin{figure}[!htp]\vspace{-1em}
\begin{center} \subfigure[ Outage Sum rate ]{\label{fig set comparison
b1}\includegraphics[width=0.42\textwidth]{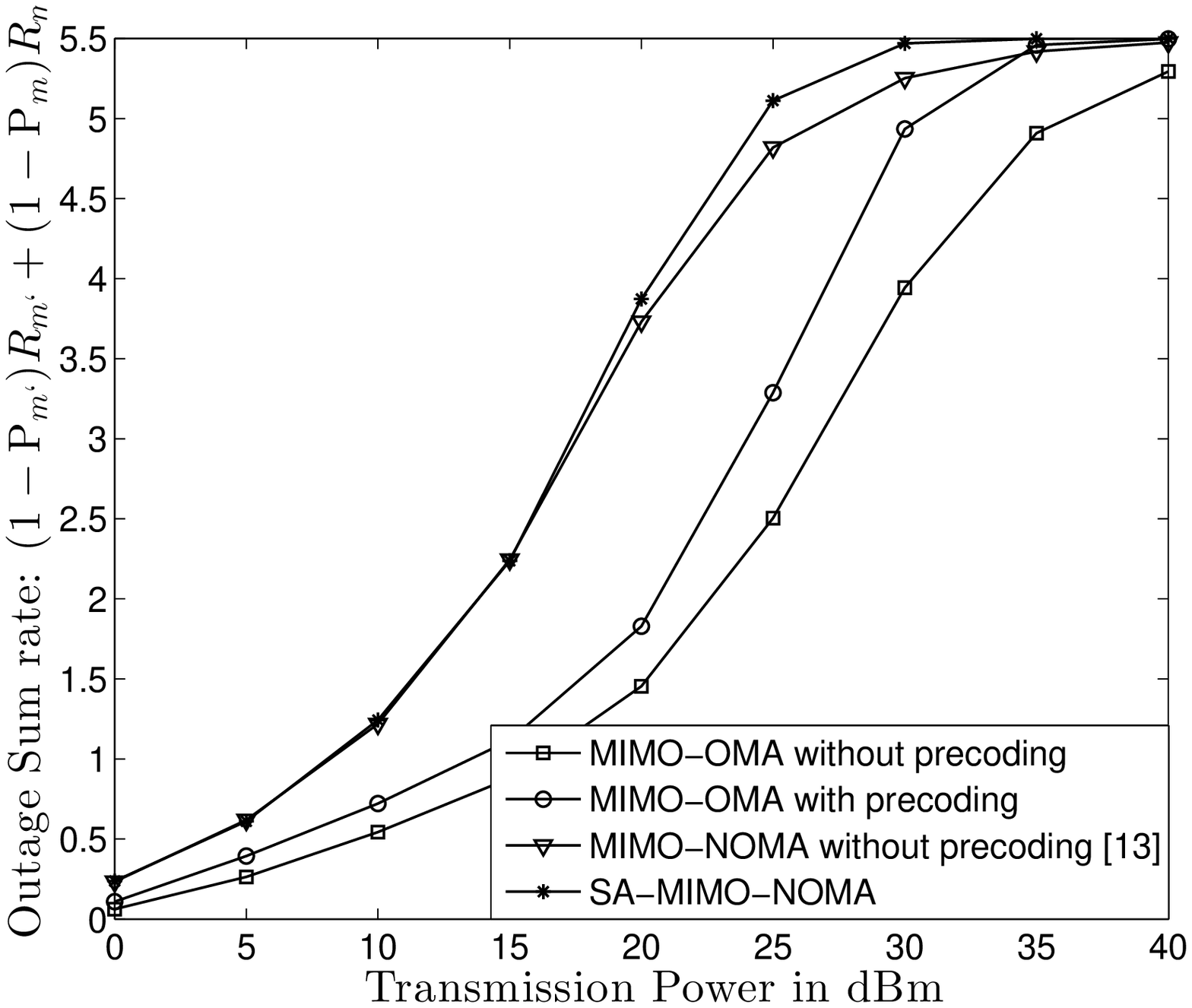}}
\subfigure[Outage Probabilities  ]{\label{fig set comparison
b2}\includegraphics[width=0.47\textwidth]{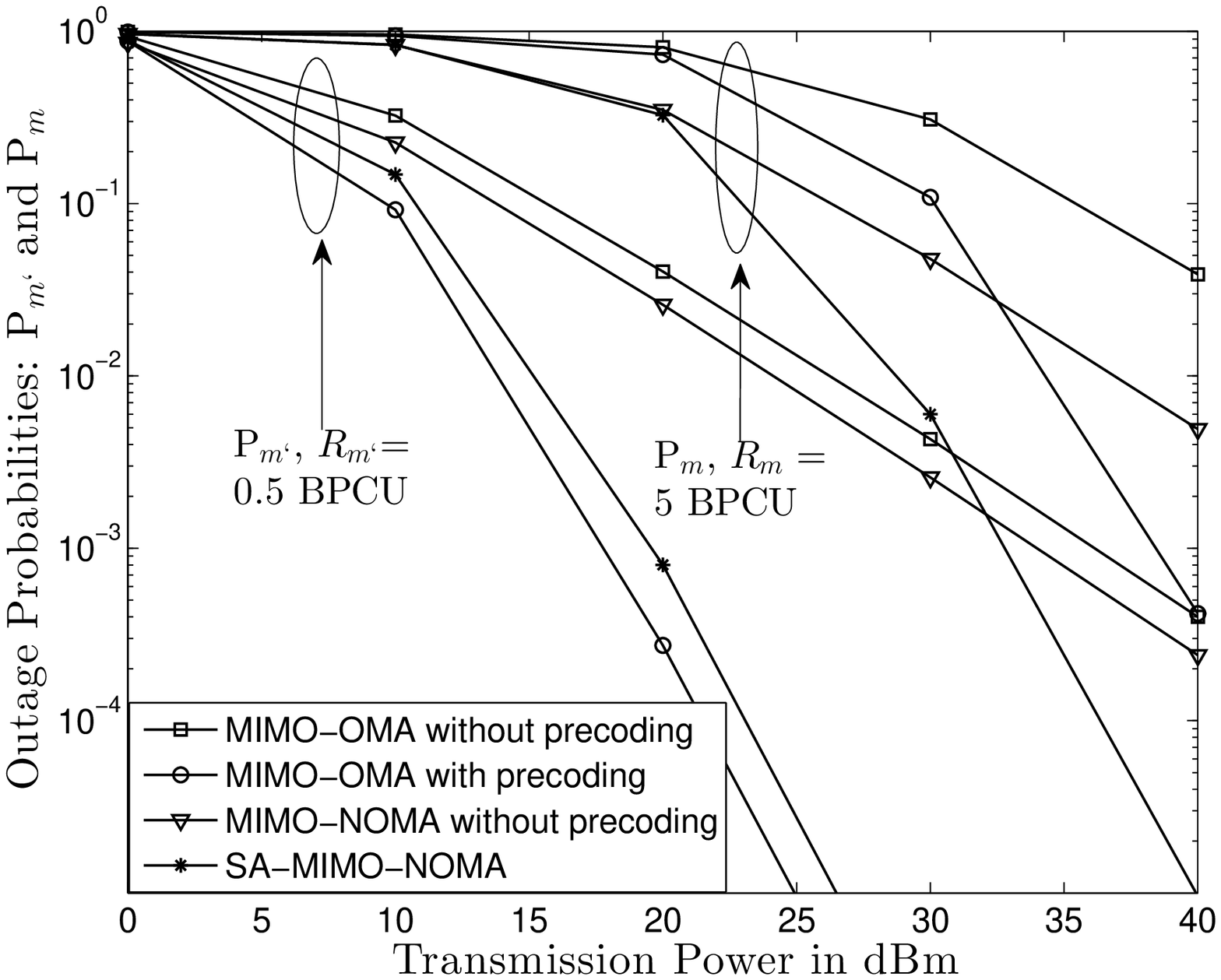}}\vspace{-1em}
\end{center}
  \caption{Performance comparison with  the three benchmark schemes for downlink transmission. $R_{m}=5$ BPCU and $R_{m'}=0.5$ BPCU.    $r=20$m and $r_1=10$m. $M=N=3$. $r_0=1$m. $a_{m'}=\frac{3}{4}$.  The path loss exponent is $\alpha=3$. The noise power is $-30$dBm and the interference power is $\rho_I=0$.  }\vspace{-1em} \label{figure2}
\end{figure}

Since the benchmark   schemes were proposed for the interference-free scenario,   Fig.~\ref{figure2} shows the performance comparison of the four schemes for   $\rho_I=0$.  In Fig. \ref{fig set comparison b1}, the downlink  outage sum rate, defined as $R_{m'}(1-\mathrm{P}_{m'})+R_{m}(1-\mathrm{P}_{m})$, is shown as a function of transmission power, and the corresponding outage probabilities are studied in Fig. \ref{fig set comparison b2}. As can be seen from the figures, the two NOMA schemes can achieve larger outage sum rates compared to the two OMA schemes, which demonstrates the superior spectral efficiency of NOMA. In Fig. \ref{fig set comparison b2}, the two schemes with precoding can achieve better outage performance than the two schemes without precoding, due to the efficient use of the  degrees of freedom at the base station. Comparing SA-MIMO-NOMA with the MIMO-NOMA scheme proposed in \cite{Zhiguo_mimoconoma}, one can observe that their outage sum rate performances are similar, but SA-MIMO-NOMA can offer  much better reception  reliability, particularly with  high transmission power. In terms of individual outage probability, SA-MIMO-NOMA can ensure   a lower  outage probability   at user $m$, i.e., a smaller  $\mathrm{P}_m$, compared to  the MIMO-OMA scheme with precoding,  but results in  performance degradation  for the outage probability at user $m'$, i.e., an increase of $\mathrm{P}_{m'}$. This is consistent with the finding in \cite{Zhiguo_CRconoma} which shows that the  NOMA user with poorer channel conditions will suffer some performance loss due to the co-channel interference from its partner.

\begin{figure}[!htp]\vspace{-1em}
\begin{center} \hspace{-3em}\subfigure[ Exact expressions ($R_{m'}=R_m=1$ BPCU)]{\label{fig set comparison
b3}  \includegraphics[width=0.48\textwidth]{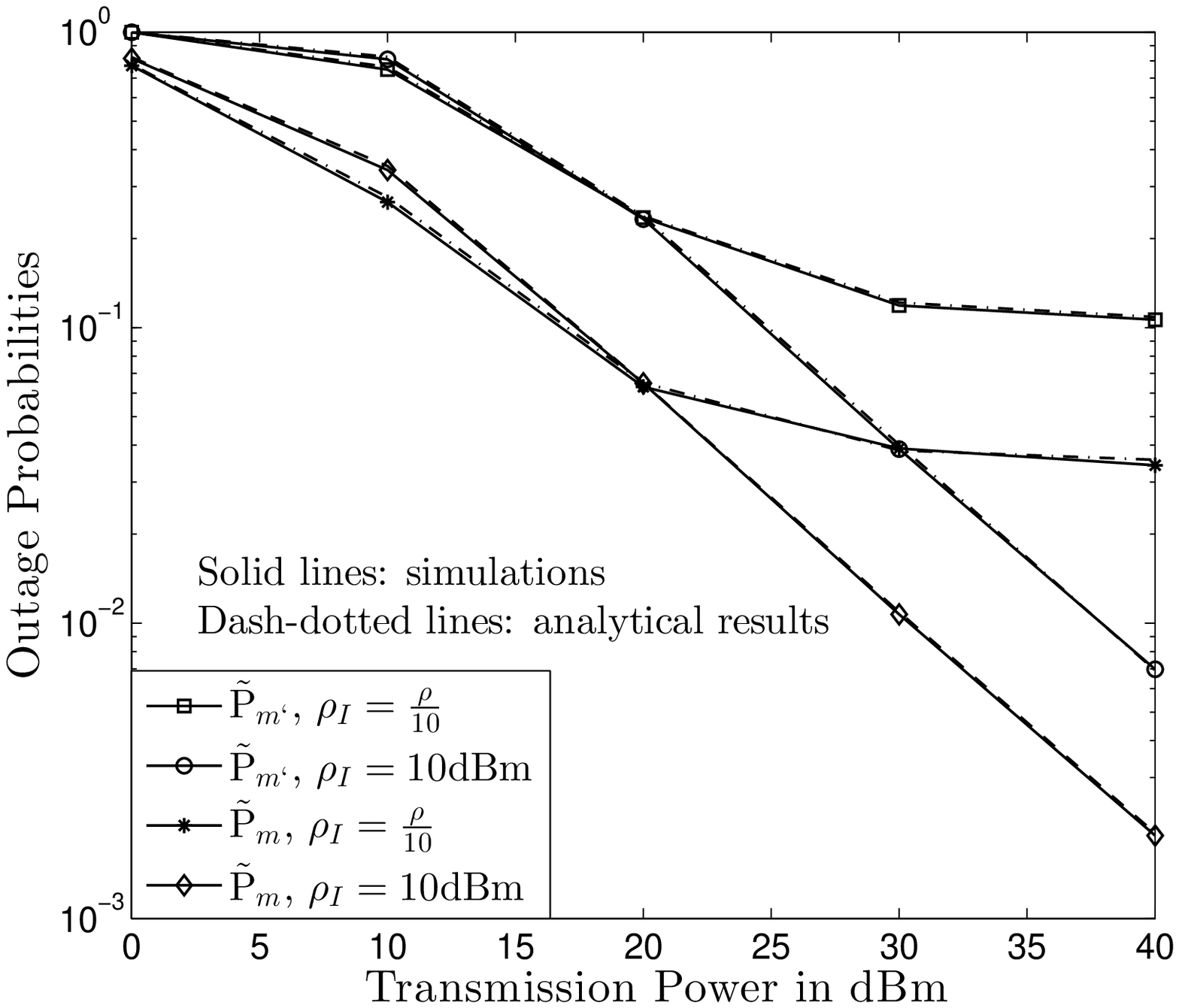}}
 \subfigure[Approximation ($\rho_I=-10$dBm)]{\label{fig set comparison
b4} \includegraphics[width=0.48\textwidth]{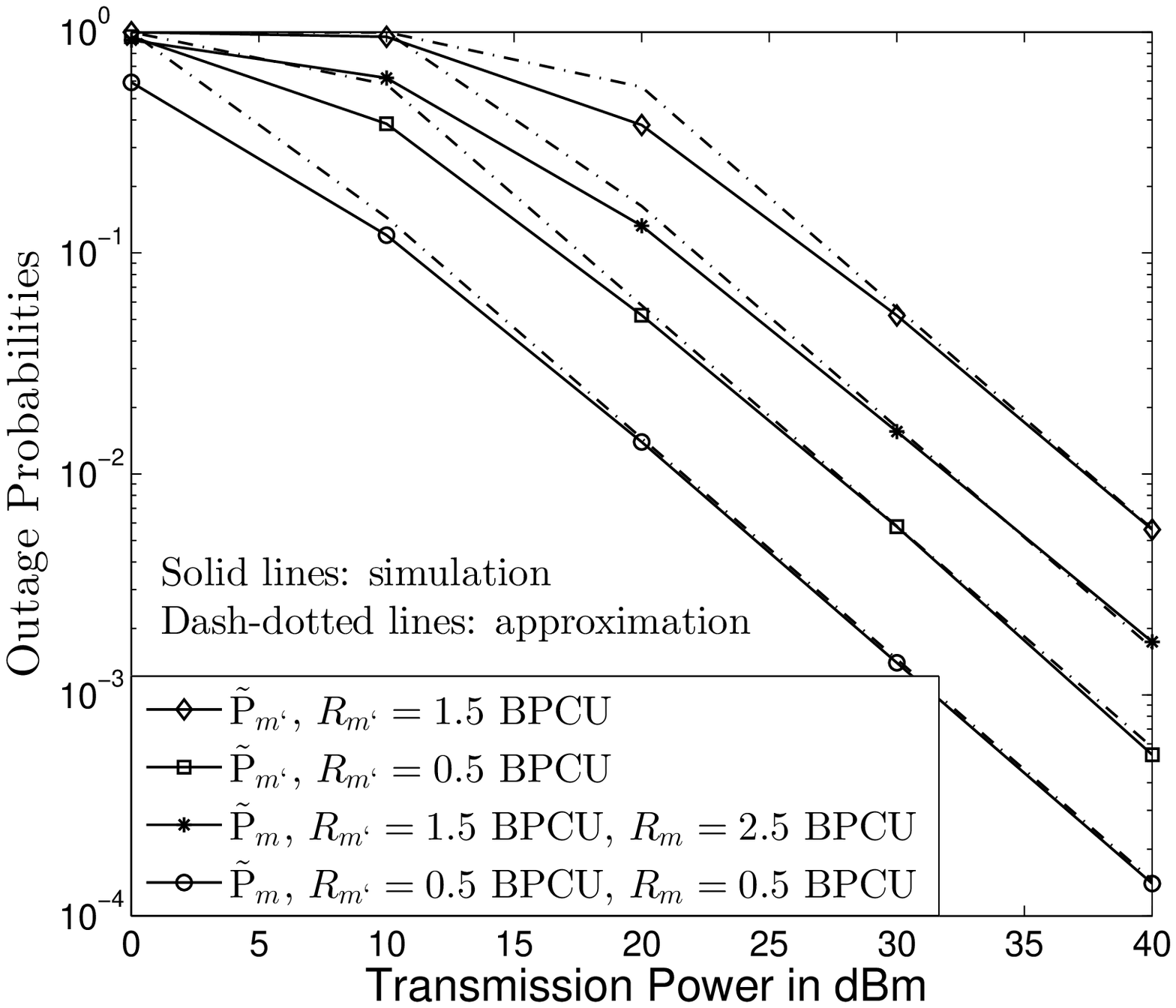}}\vspace{-1em}
\end{center}
  \caption{Outage probabilities $\tilde{\mathcal{P}}_{m'}$ and $\tilde{\mathcal{P}}_{m}$ for downlink transmission.   $\lambda_I= 10^{-4}$, $\delta=1$, $r=20$m,  $r_1=10$m, $M=N=2$,  $r_0=1$m, and  $a_{m'}=\frac{3}{4}$. The path loss exponent is $\alpha=3$ and the noise power is $-30$dBm.   The analytical results are based on Lemmas \ref{lemma1} and \ref{lemma2}.  }\label{fig3}\vspace{-1em}
\end{figure}

\begin{figure}[!htbp]\centering\vspace{-1em}
    \epsfig{file=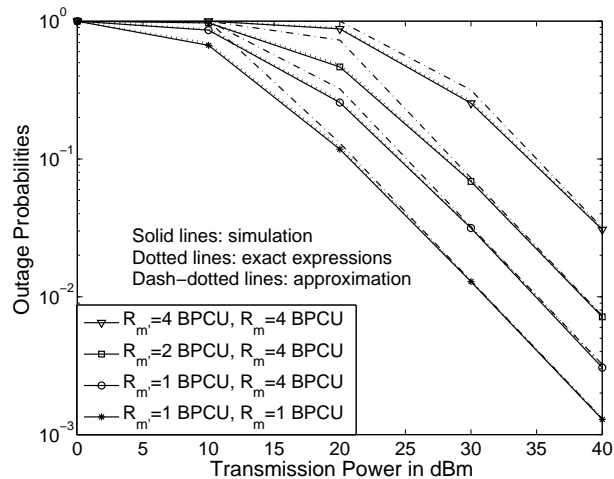, width=0.46\textwidth, clip=}\vspace{-1em}
\caption{ Outage probability  $\tilde{\mathrm{P}}_{m}$ for   cognitive radio downlink transmission.  $r=20$m,  $r_1=10$m,  $r_0=1$m, $\delta=1$, $\rho_I=0$, and  $M=N=2$. The noise power is $-30$dBm. The analytical results and the approximations are based on Lemma \ref{lemma 4}.    \vspace{-1em} }\label{fig5}
\end{figure}
In Fig. \ref{fig3}, the accuracy of the analytical results developed in Lemmas \ref{lemma1} and \ref{lemma2} for downlink transmission is verified. As can be seen from Fig. \ref{fig set comparison b3}, the exact expression developed in Lemma \ref{lemma1} perfectly matches the computer simulations, and the asymptotic results developed in Lemma \ref{lemma1} are also accurate at high SNR, as shown in Fig. \ref{fig set comparison b4}. The accuracy of Lemma \ref{lemma2} can be confirmed similarly. Note that error floors appear   when increasing $\rho_I$ in Fig. \ref{fig set comparison b3}, which is expected  due to the strong co-channel interference caused by the randomly deployed interferers. 

In Fig. \ref{fig5}, the performance of the cognitive radio power allocation scheme proposed in Section \ref{cognitive section} is studied.  In particular, given the target data rate at  user $m'$, the power allocation coefficients can be calculated opportunistically according to \eqref{choice 1 cg}. As can be seen from the figure, the probability for this NOMA system to support the secondary user, i.e., user $m$, with a target data rate of $R_{m}$ approaches one at high SNR. Note that with OMA, user $m$ cannot be admitted into the channel occupied by user $m'$, and with cognitive radio NOMA, one additional user, user $m$, can be served without degrading the outage performance of the primary user, i.e., user $m'$. 
 
 In Fig. \ref{figx11}, the impact of the number of   user antennas on the outage probability is studied. As can be seen from the figure, by increasing the number of the user antennas, the outage probability is decreased, since the dimension of the null  space, $\mathbf{U}_m$, defined in \eqref{dtection 1}, is increased and there are more possible choices for the detection vectors.  Furthermore, the slope of the outage curves is also increased, which indicates an increase of the achieved diversity order and hence confirms the findings of Lemma \ref{lemma 5}.

\begin{figure}[!htbp]\centering\vspace{-0.1em}
    \epsfig{file=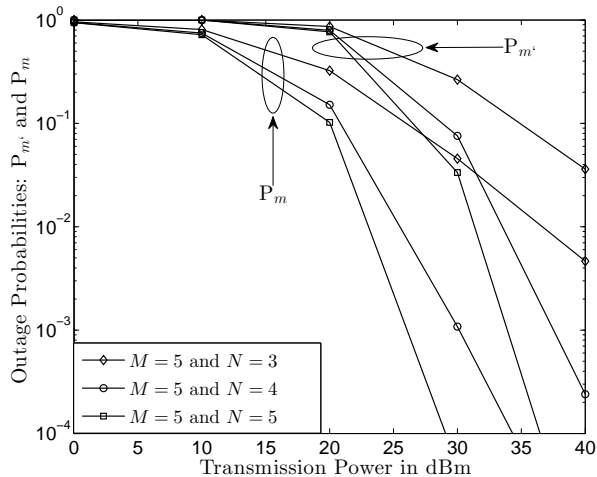, width=0.46\textwidth, clip=}\vspace{-1em}
\caption{ Impact of the number of the user antennas on the downlink outage probabilities ${\mathrm{P}}_{m}$ and $ {\mathrm{P}}_{m'}$. $R_m=4$ BPCU, $R_{m'}=1.9$ BPCU, $a_{m'}=\frac{3}{4}$, $\rho_I=0$,  $r=20$m, $r_1=10$m, and $r_0=1$m. The noise power is $-30$dBm. }\label{figx11}\vspace{-1em}
\end{figure}

\begin{figure}[!htbp]\centering\vspace{-1em}
    \epsfig{file=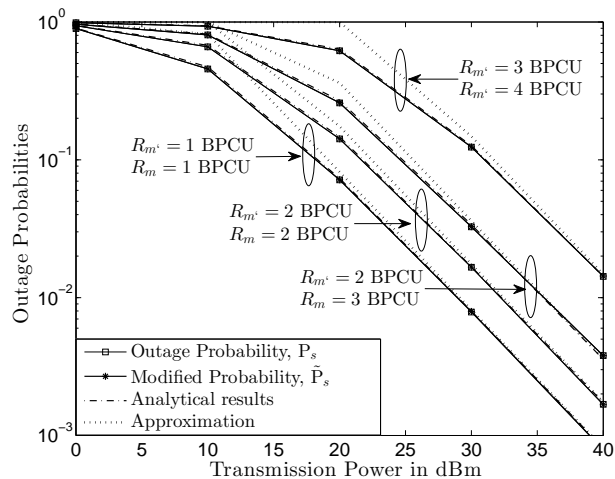, width=0.46\textwidth, clip=}\vspace{-1em}
\caption{ Outage probabilities for the uplink sum rate  ${\mathrm{P}}_{s}$ and $\tilde{\mathrm{P}}_{s}$. $r=20$m, $r_1=10$m,  $r_0=1$m, $\delta=1$, $\lambda_I= 10^{-4}$,   $\rho_I=10$dB, and  $M=N=2$. The noise power is $-30$dBm, and the interference power is $\rho_I=-10$dBm.  The analytical results and the approximations are based on \eqref{uplink1} and \eqref{uplink 2}, respectively.  }\label{fig6}\vspace{-1em}
\end{figure}
\begin{figure}[!htbp]\centering\vspace{-1em}
    \epsfig{file=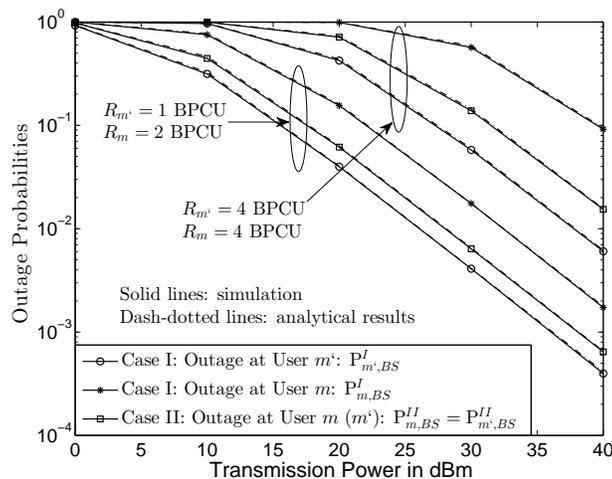, width=0.46\textwidth, clip=}\vspace{-1em}
\caption{ Uplink outage probability for user $m$ with the cognitive radio constraint. $r=4$m, $r_1=2$m, $r_0=1$m, $\rho_I=0$, and  the noise power is $-30$dBm.  The analytical results and the approximations are based on \eqref{case 12} and \eqref{case ii2}, respectively.  }\label{fig7}\vspace{-1em}
\end{figure}

The performance of the proposed NOMA framework for uplink transmission is demonstrated in Figs. \ref{fig6} and \ref{fig7}. In particular, in Fig. \ref{fig6}, the outage probability for the sum rate is investigated, and in Fig. \ref{fig7} the performance of the proposed cognitive radio uplink schemes is studied. As can be observed from both figures, the developed analytical results perfectly match  the computer simulation results, which demonstrates the accuracy of the developed analytical framework. It is worth pointing out that the modified probability $\tilde{\mathrm{P}}_s$ with $\delta=1$ provides an accurate approximation for $\mathrm{P}_s$.  An interesting observation from Fig. \ref{fig7} is that Cases I and  II offer different performance advantages.   In terms of $\mathrm{P}_{m'}$, Case I can offer a lower outage probability compared to Case II, however it results in a loss in   outage performance for  user $m$. In practice, if the QoS requirement at  user $m'$ is strict, Case I should be used, since the outage probability realized by Case I is exactly the same as when the entire bandwidth   is solely occupied by user $m'$. Otherwise, the use of Case II is more preferable  since the outage performance for  user $m$ can be improved and the system will not spend exceedingly high powers to compensate the user with poorer channel conditions. One can also observe that, for Case I with $R_{m'}=R_m$,  the outage performance for user $m$ is worse than that of  user $m'$, although  user $m$ is closer to the base station. The reason for this is because  in Case I, the power is allocated to  user $m'$ first, and user $m$ is served only if there is any power left. Therefore, the outage probability of  user $m$ will be at least the same as that of  user $m'$, as discussed in Section \ref{section uplink}.

\section{Conclusions}
In this paper, we have proposed a signal alignment based framework which is applicable to both  MIMO-NOMA downlink and uplink transmission. By applying tools from stochastic geometry, the impact of the random locations of the users and   interferers has been captured, and   closed-form expressions for the outage probability achieved by the proposed framework have been developed to facilitate performance evaluation. In addition to fixed power allocation, a more opportunistic power allocation strategy inspired by cognitive ratio networks has also been  investigated. Compared to the existing MIMO-NOMA work, the proposed framework is not only more general, i.e., applicable to both uplink and downlink transmissions, but also offers a significant performance gain in terms of reception reliability.  In this paper, it has been  assumed that  global  CSI is available, which may introduce a significant training overhead in practice. An important future direction is to study how  MIMO-NOMA transmission can be realized with limited CSI feedback.

\appendices
\section{Proof for Lemma \ref{lemma1}}
First, we  rewrite  the considered  probability $\tilde{\mathrm{P}}_{m'}$  as follows:
\begin{align}
\tilde{\mathrm{P}}_{m'}= \mathrm{P}\left( \frac{\frac{\rho \alpha_{m'}^2}{L(d_{m'})(\mathbf{G}^{-1}\mathbf{G}^{-H})_{m,m}}}{\frac{\rho
 \alpha_{m}^2}{L(d_{m'})(\mathbf{G}^{-1}\mathbf{G}^{-H})_{m,m}}+2+2\delta I_{m'}} < \epsilon_{m'}\right).
\end{align}

In order to calculate  $\tilde{\mathrm{P}}_{m'}$,   the density functions for the three parameters, $d_{m'}$, $I_{m'}$ and $\frac{1 }{(\mathbf{G}^{-1}\mathbf{G}^{-H})_{m,m}}$ have to be found.
Recall that the factor   $\frac{1 }{(\mathbf{G}^{-1}\mathbf{G}^{-H})_{m,m}}$ can be written as follows \cite{Rupp03}:
\begin{align}
\frac{1 }{(\mathbf{G}^{-1}\mathbf{G}^{-H})_{m,m}} = \mathbf{g}_m^H\left(\mathbf{I}_M - \Theta_m\right)\mathbf{g}_m,
\end{align}
where $\Theta_m=\tilde{\mathbf{G}}_m(\tilde{\mathbf{G}}_m^H
\tilde{\mathbf{G}}_m)^{-1}\tilde{\mathbf{G}}_m^H$ and $\tilde{\mathbf{G}}_m$ is obtained from $\mathbf{G}$ by removing its $m$-th row. If $\mathbf{g}_m$ is complex Gaussian distributed, the density function of  $\frac{1 }{(\mathbf{G}^{-1}\mathbf{G}^{-H})_{m,m}}$  will be exponentially distributed. This can be shown as follows. First, note that the projection matrix $(\mathbf{I}_M-\Theta_m)$ is an idempotent matrix and has eigenvalues which are either  zero or one.
Second, recall  that each row of $\mathbf{G}$ is generated from an $M\times 2N$ complex Gaussian matrix $\begin{bmatrix}\mathbf{G}_m^H & \mathbf{G}_{m'}^H  \end{bmatrix}$, i.e., \begin{align}
\mathbf{g}_m &= \frac{1}{2} \begin{bmatrix}\mathbf{G}_m^H & \mathbf{G}_{m'}^H  \end{bmatrix}\begin{bmatrix}\mathbf{v}_m^H & \mathbf{v}_{m'}^H  \end{bmatrix}^H=\frac{1}{2} \begin{bmatrix}\mathbf{G}_m^H & \mathbf{G}_{m'}^H  \end{bmatrix}\mathbf{U}_m \mathbf{x}_m.
\end{align}
Hence, provided that $\mathbf{x}_m$ is a randomly generated and normalized vector, the application of Proposition 1 in \cite{Dingtong11} yields the following
\begin{align}
\mathbf{g}_m \quad \sim  \quad {\rm CN}(0, \mathbf{I}_{M}),
\end{align}
i.e., $\mathbf{g}_m$ is still an $M\times 1$ complex Gaussian (CN) vector. Therefore, $\frac{1 }{(\mathbf{G}^{-1}\mathbf{G}^{-H})_{m,m}}$ is indeed exponentially distributed, and the outage probability can be expressed as follows:
\begin{align}
\tilde{\mathrm{P}}_{m'}= \mathcal{E}_{I_{m'},d_{m'}}\left\{1 - e^{- 2\phi_{m'}L(d_{m'})}\underset{Q_1}{\underbrace{e^{- 2\delta \phi_{m'}L(d_{m'}) I_{m'}}}}\right\},
\end{align}
which is conditioned on $ \alpha_{m'}^2 > \alpha_{m}^2 \epsilon_{m'}$. Otherwise, $\tilde{\mathrm{P}}_{m'}$ is always one.

Since the homogenous PPP $\Psi_I$ is stationary, the statistics of the interference    seen by user $m'$ is the same as that seen by any other receiver, according to Slivnyak's theorem \cite{5895051}. Therefore, $I_{m'}$ can be equivalently evaluated by  focusing on the interference reception seen at  a node located at the origin, denoted by  $
 I_{0}= \underset{j\in\Psi_I}{\sum}\frac{\rho_I}{L\left(d_{I_j}\right)}$, where $d_{I_j}$ denotes the distance between the origin and the $j$-th interference source. As a result, the expectation of $Q_1$ with respect to $I_{m'}$ can be expressed as follows: \cite{Haenggi}, \cite{6750425}
\begin{align}\label{eq 86}
\mathcal{E}_{I_{m'}}\left\{Q_1\right\}&= \mathcal{E}_{I_{m'}}\left\{ e^{- 2 \delta \phi_{m'}L\left(d_{m'}\right) \underset{j\in\Psi_I}{\sum}\frac{\rho_I}{L\left(d_{I_j}\right)}}\right\}= {\rm exp}\left(-\lambda_I \int_{t\in  {\mathcal{R}}^2} \left(1-e^{- 2 \delta \phi_{m'}\rho_IL\left(d_{m'}\right)L(p)}\right) dp \right),
\end{align}
where $p$ denotes the coordinate of the interference source, and  $d$ denotes the distance. Note that   distance $d$ is determined by the node location $p$. After changing to polar coordinates, the factor $\mathcal{E}_{I_{m'}}$ can be calculated as follows:
\begin{align}\nonumber
\mathcal{E}_{I_{m'}}\left\{Q_1\right\}&= {\rm exp}\left(-\pi\lambda_Ir_0^2 \left(1-e^{- \frac{\beta_{m'}(d_{m'})}{r_0^\alpha}}\right)  \right) {\rm exp}\left(-2\pi\lambda_I \int_{r_0}^\infty  \left(1-e^{- \frac{\beta_{m'}(d_{m'})}{x^\alpha}}\right) xdx \right)\\  \label{incomplete} &  =  {\rm exp}\left(-\pi \lambda_I \beta_{m'}^{\frac{2}{\alpha}}
\gamma\left(\frac{1}{\alpha},\frac{\beta_{m'}}{r_0^\alpha}\right)\right),
\end{align}
where     $\beta_{m'}(d_{m'})$ is denoted by  $\beta_{m'}$ for notational simplicity.
Therefore, the outage probability can be expressed as follows:
\begin{align}
\tilde{\mathrm{P}}_{m'}=1- \mathcal{E}_{d_{m'}}\left\{ e^{- 2\phi_{m'}(d_{m'}^\alpha)}\mathcal{E}_{I_{m'}}\left\{Q_1\right\}\right\}.
\end{align}

 Recall that  user $m'$ is uniformly distributed  in the ring $\mathcal{D}_2$. Therefore, the above expectation with respect to $d_{m'}$ can be calculated as follows:
 \begin{align}
\tilde{\mathrm{P}}_{m'}=1-   \int_{p\in\mathcal{D}_2} e^{- 2\phi_{m'}L(d_{m'})}\mathcal{E}_{I_{m'}}\left\{Q_1\right\}\frac{dp}{\pi r^2-\pi r_1^2},
\end{align}
where    distance $d_{m'}$ is determined by the user location $p$. Changing again to polar coordinates, this probability can be expressed as follows:
\begin{align}\label{pp1}
\tilde{\mathrm{P}}_{m'}&=1-  \frac{2}{  r^2-  r_1^2} \int_{r_1}^r e^{- 2\phi_{m'}x^\alpha}\mathcal{E}_{I_{m'}}\left\{Q_1\right\}xdx.
\end{align}
Hence, the first part of the lemma is proved.

In the case that $\rho$ approaches   infinity and  $\rho_I$ is fixed, it is easy to verify that  $\phi_{m'}$, as well as $\beta_{m'}$, go to zero. Hence, the  incomplete Gamma function in \eqref{incomplete} can be approximated as follows:
\begin{align}
\gamma\left(\frac{1}{\alpha},\frac{\beta_{m'}}{r_0^\alpha}\right) &= \sum^{\infty}_{n=0}\frac{(-1)^n \left(\frac{\beta_{m'}}{r_0^\alpha}\right)^{\frac{1}{\alpha}+n}}
{n!\left(\frac{1}{\alpha}+n\right)}
\approx \alpha \left(\frac{\beta_{m'}}{r_0^\alpha}\right)^{\frac{1}{\alpha}}.
\end{align}
Therefore, the factor $\mathcal{E}_{I_{m'}}$ can be approximated as follows:
\begin{align}
\mathcal{E}_{I_{m'}}\left\{Q_1\right\}  &  \approx {\rm exp}\left( -\pi \lambda_I \beta_{m'}^{\frac{2}{\alpha}}
 \alpha \left(\frac{\beta_{m'}}{r_0^\alpha}\right)^{\frac{1}{\alpha}}\right) \triangleq  e^{- d_{m'}^\alpha\theta_{m'}}
,
\end{align}
where $\theta_{m'}=2\pi \lambda_I\delta \phi_{m'}\rho_I\frac{\alpha}{r_0}$.
 Using this approximation the outage probability can be simplified at high SNR as follows:
\begin{align}
\tilde{\mathrm{P}}_{m'}&\approx1-  \frac{2}{  r^2-  r_1^2} \int_{r_1}^r e^{- 2\phi_{m'}x^\alpha}e^{-  x^{\alpha}\theta_{m'}}xdx \\ \nonumber
&\approx1-  \frac{2}{  r^2-  r_1^2} \int_{r_1}^r \left(1- (2\phi_{m'}+\theta_{m'})x^\alpha\right)xdx
=\frac{2(2\phi_{m'}+\theta_{m'})}{  r^2-  r_1^2}\frac{\left(r^{\alpha+2}-r_1^{\alpha+2}\right)}{\alpha+2}.
\end{align}

For the special cause without co-channel interfere, i.e., $\rho_I=0$, the   probability in \eqref{pp1} can be simplified as follows:
\begin{align}
\tilde{\mathrm{P}}_{m'}&=1-  \frac{2}{  r^2-  r_1^2} \int_{r_1}^r e^{- 2\phi_{m'}x^\alpha} xdx  = 1 - \frac{1}{  r^2-  r_1^2} \int^{r^{\alpha}}_{r_1^{\alpha}} e^{-2\phi_{m'}y}dy^{\frac{2}{\alpha}}\\ \nonumber & = 1 - \frac{e^{-2\phi_{m'}}}{  r^2-  r_1^2} \left(  e^{-r^{\alpha}} r^2 - e^{-r_1^{\alpha}} r_1^2\right)-\frac{(2\phi_{m'})^{-\frac{2}{\alpha}}}{  r^2-  r_1^2} \left(\gamma\left(\frac{2}{\alpha}+1, 2\phi_{m'}r^{\alpha}\right)-\gamma\left(\frac{2}{\alpha}+1, 2\phi_{m'}r_1^{\alpha}\right)\right) ,
\end{align}
and the lemma is proved.

\section{Proof for Lemma \ref{lemma2}}
When $ \alpha_{m'}^2 > \alpha_{m}^2 \epsilon_{m'}$, the outage probability  $\tilde{\mathrm{P}}_{m}$ can be written as follows:
\begin{align}
\tilde{\mathrm{P}}_{m}&=  \mathrm{P}\left(|h_m|^2<2\phi_{m'}(1+\delta I_m)\right) +\mathrm{P}\left( |h_m|^2<2\phi_m(1+\delta I_m), |h_m|^2>2\phi_{m'}(1+\delta I_m)\right)\\ \nonumber &= \mathrm{P}\left(|h_m|^2<2\max\{\phi_m,\phi_{m'}\}(1+\delta I_m)\right) ,
 \end{align}

The reason why $\tilde{\mathrm{P}}_{m}$ is an upper bound on $ {\mathrm{P}}^o_{m}$ for $\delta\geq N$ can be explained as follows. Recall that the original outage probability $\mathrm{P}^o_{m}$ can be expressed as $ {\mathrm{P}}^o_{m}= \mathrm{P}\left(|h_m|^2<\max\{\phi_m,\phi_{m'}\}(|\mathbf{v}_m|^2+|\mathbf{v}_m^H\mathbf{1}_N|^2I_{m})\right)$.
Since $|\mathbf{v}_m^H\mathbf{1}_N|^2\leq N |\mathbf{v}_m|^2$ and $|\mathbf{v}_m|^2\leq 2$, we have $\mathrm{P}^o_{m}\leq \tilde{\mathrm{P}}_{m}$ if $\delta\geq N$. It is worth pointing out that a choice of $\delta=1$ is sufficient to yield a tight approximation  on  $\mathrm{P}^o_{m}$, as shown in Fig. \ref{bound}.

 Recall that $h_{m}= \frac{1 }{\sqrt{L(d_{m})(\mathbf{G}^{-1}\mathbf{G}^{-H})_{m,m}}} $. Comparing  $h_m$ to $h_{m'}$, we   find that the only difference between the two is the distance $d_m$ which is less than $r_1$. In addition, the statistics  of $I_m$ can be studied by using $I_0$ as explained in the proof of Lemma \ref{lemma1}.  Therefore, following   steps similar to those in the proof of Lemma \ref{lemma1},   the outage probability can be expressed as follows:
\begin{align}
\tilde{\mathrm{P}}_{m}= \mathcal{E}_{I_{m},d_{m}}\left\{1 - e^{- 2\tilde{\phi}_{m}L(d_{m})}\underset{Q_2}{\underbrace{e^{- 2\tilde{\phi}_{m}L(d_{m}) I_{m}}}}\right\}.
\end{align}
 It is straightforward to show that  the expectation of $Q_2$ can be obtained in the same way as that of $Q_1$, by replacing $\phi_{m'}$ with $\tilde{\phi}_m$. In addition, recall that  user $m$ is uniformly distributed  in the disc $\mathcal{D}_1$. Therefore, the outage probability  can be calculated as follows:
 \begin{align}
\tilde{\mathrm{P}}_{m}=1-   \int_{p\in\mathcal{D}_1} e^{- 2\tilde{\phi}_{m}L(d_{m})}\varphi_I(L(d_m))\frac{dp}{\pi  r_1^2},
\end{align}
where    distance $d_{m}$  is again determined by the user location $p$. Resorting to polar coordinates, the outage probability can be expressed as follows:
\begin{align}
\tilde{\mathrm{P}}_{m}&=1-  \frac{2}{   r_1^2} \int_{0}^{r_0} e^{- 2\tilde{\phi}_{m}r_0^\alpha}\varphi_I(L(x)xdx - \frac{2}{   r_1^2} \int_{r_0}^{r_1} e^{- 2\tilde{\phi}_{m}x^\alpha}\varphi_I(L(x))xdx.
\end{align}
If $\rho$ approaches  infinity and  $\rho_I$ is fixed, both $\beta_{m}$ and  $\tilde{\phi}_{m}$ go to zero. With this approximation, the  incomplete Gamma function in \eqref{incomplete} can be approximated as $
\mathcal{E}_{I_{m}}\left\{Q_1\right\}\approx  e^{- d_{m}^\alpha {\theta}_{m}}$,
where $\theta_{m}=2\pi \lambda_I\tilde{\phi}_{m}\rho_I\frac{\alpha}{r_0}$.
Hence, the outage  probability can be simplified at high SNR as follows:
\begin{align}
\tilde{\mathrm{P}}_{m}&\approx 1-  \frac{2}{   r_1^2} \int_{0}^{r_0} e^{- 2\tilde{\phi}_{m}r_0^\alpha} e^{- r_0^\alpha\theta_{m}}xdx- \frac{2}{   r_1^2} \int_{r_0}^{r_1} e^{- 2\tilde{\phi}_{m}x^\alpha} e^{- x^\alpha\theta_{m}}xdx
\\ \nonumber
&\approx 1- \frac{r_0^2}{r_1^2}\left( 1- 2\tilde{\phi}_{m}r_0^\alpha- r_0^\alpha\theta_{m}\right) - \frac{2}{   r_1^2} \int_{r_0}^{r_1} \left(1- (2\tilde{\phi}_{m}+\theta_m)x^\alpha\right)xdx \\\nonumber &\approx \frac{(2\tilde{\phi}_m+\theta_m)}{r_1^2(\alpha+2) }\left(\alpha r_0^{\alpha+2}+2r_1^{\alpha+2}\right),
\end{align}  and the lemma is proved.

\section{Proof for Lemma \ref{lemma 4}}

There are three types of outage events at  user $m$, as illustrated in the following:
\begin{itemize}
\item  $\bar{\alpha}_m^2=0$, i.e., all the power is consumed by  user $m'$ and no power is allocated to  user $m$. This event is denoted by $E_1$.
    \item  When $\bar{\alpha}_m^2>0$,  user $m$ cannot decode the message to  user $m'$. This event is denoted by $E_2$.
    \item When $\bar{\alpha}_m^2>0$,  user $m$ can decode the message to  user $m'$, but fails to decode its own message. This event is denoted by $E_3$.
\end{itemize}
The probability of $E_1$ can be expressed as follows:
\begin{align}
\mathrm{P}(E_1) = \mathrm{P}\left(\rho |h_{m'}|^2-2\epsilon_{m'}<0\right).
\end{align}
This probability can be straightforwardly obtained from the proof of Lemma \ref{lemma1} by replacing $\phi_{m'}$ with $\breve{\phi}_{m'}\triangleq \frac{2\epsilon_{m'}}{\rho}$. Therefore, $\mathrm{P}(E_1)$ can be expressed as follows:
\begin{align}\label{eq1}
&{\mathrm{P}}(E_1)   = 1 - \Upsilon_1\left(\frac{2\epsilon_{m'}}{\rho}\right).
\end{align}

When $\bar{\alpha}_m>0$,  $\mathrm{P}(E_2)=0$, since
\begin{align}\label{eq3}
\mathrm{P}(E_2)=&\mathrm{P}\left(\frac{\rho|h_{m}|^2(1-\bar{\alpha}_{m}^2)}{\rho|h_{m}|^2
 \bar{\alpha}_{m}^2+2}<\epsilon_{m'}\right)
 =\mathrm{P}\left(\rho|h_{m}|^2(1-\bar{\alpha}_{m}^2)<\epsilon_{m'}(\rho|h_{m}|^2
 \bar{\alpha}_{m}^2+2)\right)\\ \nonumber
 =&\mathrm{P}\left(\rho|h_{m}|^2<\bar{\alpha}_{m}^2\rho|h_{m}|^2(1+\epsilon_{m'})
 +2\epsilon_{m'}\right)
 \\ \nonumber
 =&\mathrm{P}\left(\rho|h_{m}|^2 |h_{m'}|^2<\rho |h_{m'}|^2 |h_{m}|^2-2 |h_{m}|^2\epsilon_{m'}
 +2 |h_{m'}|^2\epsilon_{m'} \right) =\mathrm{P}\left( |h_{m}|^2  <
  |h_{m'}|^2  \right)=0.
\end{align}

The probability for   event $E_3$ can be calculated as follows:
\begin{align}
\mathrm{P}(E_3) &= \mathrm{P}\left(\log\left(1+\frac{\rho}{2} |h_m|^2\alpha_m^2\right)<R_m, \bar{\alpha}_m>0\right)
= \mathrm{P}\left( |h_m|^2\frac{\rho |h_{m'}|^2-2\epsilon_{m'}}{2(1+\epsilon_{m'})  |h_{m'}|^2}<\epsilon_m, |h_{m'}|^2>\frac{2\epsilon_{m'}}{\rho}\right).
\end{align}
An important observation is that both   channel gains $h_m$ and $h_{m'}$ share the same small scale fading. Defining  $x=\frac{1}{(\mathbf{G}^{-1}\mathbf{G}^{-H})_{m,m}}$,   the outage probability can be expressed as follows:
\begin{align}
\mathrm{P}(E_3)  &= \mathrm{P}\left( \frac{x}{L(d_{m})}\frac{\rho \frac{x}{L(d_{m'})}-2\epsilon_{m'}}{(1+\epsilon_{m'})  \frac{x}{L(d_{m'})}}<2\epsilon_m, \frac{x}{L(d_{m'})}>\frac{2\epsilon_{m'}}{\rho}\right)\\ \nonumber  &= \mathrm{P}\left(\frac{2\epsilon_{m'}L(d_{m'})}{\rho}< x <\frac{2\epsilon_{m'}L(d_{m'})}{\rho} +\frac{2\epsilon_m(1+\epsilon_{m'}) L(d_{m})}{\rho} \right).
\end{align}

The above probability can be calculated as follows
\begin{align}
\mathrm{P}(E_3)  &=  \underset{p_{m'}\in \mathcal{D}_2}{ \int}   e^{-\frac{2\epsilon_{m'}L(d_{m'})}{\rho}}dp_{m'} - \hspace{-0.5em} \underset{p_m\in \mathcal{D}_1,p_{m'}\in \mathcal{D}_2}{\int\int} \hspace{-1.5em} e^{-\frac{2\epsilon_{m'}L(d_{m'})}{\rho}-\frac{2\epsilon_m(1+\epsilon_{m'}) L(d_{m})}{\rho}}dp_mdp_{m'},
\end{align}
where $p_m$ denotes the location of  user $m$. Since the users are uniformly distributed, the above probability can be expressed  as follows:
\begin{align}\label{eq2}
\mathrm{P}(E_3)  &=\frac{2}{(r^2-r_1^2)} \int_{r_1}^{r}e^{-\frac{2\epsilon_{m'} }{\rho y^{\alpha}}}ydy-\frac{4}{r_1^2(r^2-r_1^2)} \int_{0}^{r_1} e^{-\frac{2\epsilon_m(1+\epsilon_{m'}) }{\rho y^{\alpha}}}ydy \int_{r_1}^{r}e^{-\frac{2\epsilon_{m'}L(x)}{\rho}} xdx
\\ \nonumber  &=\Upsilon_1
\left(\frac{2\epsilon_{m'} }{\rho }\right) - \Upsilon_1\left(\frac{2\epsilon_{m'}}{\rho}\right)\Upsilon_2
\left(\frac{2\epsilon_m(1+\epsilon_{m'}) }{\rho }\right).
\end{align}
Combining \eqref{eq1}, \eqref{eq3}, and \eqref{eq2}, the first part of the lemma can be proved.
To obtain the high SNR approximation,  we have
\begin{align}
\Upsilon_1(y) & \approx 1+ \frac{1}{  r^2-  r_1^2} \left(  yr_1^{\alpha+2} -yr^{\alpha+2} \right) +\frac{y^{-\frac{2}{\alpha}}}{ (\frac{2}{\alpha}+1)(r^2-  r_1^2)}\left(\left( yr^{\alpha}\right)^{\frac{2}{\alpha}+1}-\left(yr_1^{\alpha}\right)^{\frac{2}{\alpha}+1}\right) \\ \nonumber & = 1-\frac{2y}{ (2+\alpha)( r^2-  r_1^2)} \left(  r^{\alpha+2} -r_1^{\alpha+2} \right) ,
\end{align}
when $y$  approaches zero,
and
\begin{align}\nonumber
\Upsilon_2(z) & \approx 1 - \frac{r_0^{2+\alpha} z}{r_1^2}- \frac{1}{    r_1^2}  \left(  zr_1^{\alpha+2} -  zr_0^{\alpha+2}\right) +\frac{z^{-\frac{2}{\alpha}}}{  (\frac{2}{\alpha}+1)  r_1^2}  \left(\left( zr_1^{\alpha}\right)^{\frac{2}{\alpha}+1}-\left( zr_0^{\alpha}\right)^{\frac{2}{\alpha}+1}\right) \\
& = 1 - \frac{r_0^{2+\alpha} z}{r_1^2}  -\frac{2z}{  (2+\alpha)  r_1^2} \left(  r_1^{\alpha+2} -  r_0^{\alpha+2}\right),
\end{align}
when $z$  approaches zero. By substituting the above approximations into \eqref{eq lemma 3},  the lemma is proved.
\section{Proof for Lemma \ref{lemma 5}}
We focus on the outage performance of  user $m'$ first.  
Given  the detection vector $\mathbf{v}_{m,i^*}$ chosen from   Table \ref{alg:stuff},  the outage probability   can be   upper bounded  as follows:
\begin{align}\nonumber
 {\mathrm{P}}_{m',i^*}&\leq  \mathrm{P}\left( \frac{\frac{\rho \alpha_{m'}^2}{L(d_{m'})(\bar{\mathbf{G}}_{i^*}^{-1}\bar{\mathbf{G}}_{i^*}^{-H})_{m,m}}}{\frac{\rho
 \alpha_{m}^2}{L(d_{m'})(\bar{\mathbf{G}}_{i^*}^{-1}\bar{\mathbf{G}}_{i^*}^{-H})_{m,m}}+2+2\delta I_{m'}} < \epsilon_{m'}\right)\\   &=\mathrm{P}\left( \gamma_{m,{i^*}}<2\phi_{m'}L(d_{m'})(1+\delta I_{m'})\right)
 \leq\mathrm{P}\left( \gamma_{{\rm min},{i^*}}<2\phi_{m'}L(d_{m'})(1+\delta I_{m'})\right).
\end{align}
According to the algorithm proposed in Table \ref{alg:stuff},
\begin{align}
\gamma_{{\rm min},i^*}= \max  \{\gamma_{{\rm min},1},\cdots,\gamma_{{\rm min},2N-M}\}.
\end{align}
Therefore, the outage probability   can be bounded as follows:
\begin{align}\nonumber
 {\mathrm{P}}_{m',i^*}  &\leq\left(\mathrm{P}\left( \gamma_{{\rm min},i}<2\phi_{m'}L(d_{m'})(1+\delta I_{m'})\right)\right)^{2N-M},
\end{align}
where the inequality follows from the fact that $\gamma_{{\rm min},i}$ and $\gamma_{{\rm min},j}$ are independent, since $\mathbf{g}_{m,i}$ and $\mathbf{g}_{m,j}$ are independent (Proposition 1 in \cite{Dingtong11}). The above outage probability can be further bounded as follows:
\begin{align} \label{xcc}
 {\mathrm{P}}_{m',i^*}  &\leq\left(M\mathrm{P}\left( \gamma_{m,i}<2\phi_{m'}L(d_{m'})(1+\delta I_{m'})\right)\right)^{2N-M}.
\end{align}

Following the same steps  as in the proof of Lemma \ref{lemma1}, the upper bound on the outage probability can be calculated as follows:
\begin{align}
{\mathrm{P}}_{m',i^*} &\leq M^{2N-M}\mathcal{E}_{I_{m'},d_{m'}}\left\{\left(1 - e^{- 2\phi_{m'}L(d_{m'})} e^{- 2\delta \phi_{m'}L(d_{m'}) I_{m'}}\right)^{2N-M}\right\}\\ \nonumber&\leq M^{2N-M} \sum^{2N-M}_{i=0}{2N-M \choose i}(-1)^i  \mathcal{E}_{I_{m'},d_{m'}}\left\{e^{- 2i\phi_{m'}L(d_{m'})} e^{- 2i\delta \phi_{m'}L(d_{m'}) I_{m'}} \right\},
\end{align}
which is conditioned on $ \alpha_{m'}^2 > \alpha_{m}^2 \epsilon_{m'}$.

After the expectation with respect to $I_{m'}$, the outage probability can be bounded as follows:
\begin{align}
{\mathrm{P}}_{m',i^*}&\leq M^{2N-M} \sum^{2N-M}_{i=0}{2N-M \choose i}(-1)^i    \mathcal{E}_{ d_{m'}}\left\{e^{- 2i\phi_{m'}L(d_{m'})}e^{-\pi \lambda_I (i\beta_{m'})^{\frac{2}{\alpha}}
\gamma\left(\frac{1}{\alpha},\frac{i\beta_{m'}}{r_0^\alpha}\right)} \right\}.
\end{align}

For  the case  of $\rho$ approaching   infinity and a fixed $\rho_I$, the upper bound on the outage probability can be approximated as follows:
\begin{align}
{\mathrm{P}}_{m',i^*}&\leq M^{2N-M} \sum^{2N-M}_{i=0}{2N-M \choose i}(-1)^i   \mathcal{E}_{ d_{m'}}\left\{e^{- 2i\phi_{m'}L(d_{m'})}e^{-\pi \lambda_I (i\beta_{m'})^{\frac{2}{\alpha}}\alpha
\left(\frac{i\beta_{m'}}{r_0^\alpha}\right)^{\frac{1}{\alpha}}} \right\}\\\nonumber
&\leq M^{2N-M} \sum^{2N-M}_{i=0}{2N-M \choose i}(-1)^i    \mathcal{E}_{ d_{m'}}\left\{ e^{-(i\theta_{m'}+2i\phi_{m'})d_m^\alpha} \right\}.
\end{align}
Using    polar coordinates, the upper bound can be calculated as follows:
\begin{align}
{\mathrm{P}}_{m',i^*}
&\leq M^{2N-M} \sum^{2N-M}_{i=0}{2N-M \choose i}(-1)^i  \frac{2}{  r^2-  r_1^2}\sum^{\infty}_{j=0} \int_{r_1}^r   \frac{(-1)^j(i\theta_{m'}+2i\phi_{m'})^jx^{j\alpha}}{j!}  xdx
\\
&= \frac{2 M^{2N-M}}{  r^2-  r_1^2} \sum^{2N-M}_{i=0}{2N-M \choose i}(-1)^i   \sum^{\infty}_{j=0}    \frac{(-1)^j(i\theta_{m'}+2i\phi_{m'})^j}{j!} \frac{\left(r^{j\alpha+2}-r_1^{j\alpha+2}\right)}{j\alpha+2} .
\end{align}

By exchanging the two sums in the above equation, the upper bound can be rewritten as follows:
\begin{align}
{\mathrm{P}}_{m',i^*}
&\leq   \frac{2 M^{2N-M}}{  r^2-  r_1^2} \sum^{\infty}_{j=0}    \frac{(-1)^j(\theta_{m'}+2\phi_{m'})^j}{j!}   \frac{\left(r^{j\alpha+2}-r_1^{j\alpha+2}\right)}{j\alpha+2} \sum^{2N-M}_{i=0}{2N-M \choose i}(-1)^i i^j
\\
& =  \frac{2 M^{2N-M}}{  r^2-  r_1^2} \sum^{\infty}_{j=2N-M}   \frac{(-1)^j(\theta_{m'}+2\phi_{m'})^j}{j!}  \frac{\left(r^{j\alpha+2}-r_1^{j\alpha+2}\right)}{j\alpha+2} \sum^{2N-M}_{i=0}{2N-M \choose i}(-1)^i i^j,
\end{align}
where the last step follows from the following fact $$ \sum^{2N-M}_{i=0}{2N-M \choose i}(-1)^i i^j=0$$
for $0\leq j \leq (2N-M-1)$ \cite{GRADSHTEYN}. Furthermore, note that both $\phi_{m'}$ and $\theta_{m'}$ approach   zero for the considered scenario, and $\sum^{2N-M}_{i=0}{2N-M \choose i}(-1)^i i^{2N-M}=(-1)^{2N-M}(2N-M)!$. Therefore, the upper bound on the outage probability can be approximated  as follows:
\begin{align}
&{\mathrm{P}}_{m',i^*}
\leq     \frac{2 M^{2N-M}}{  r^2-  r_1^2}   \frac{(-1)^{2N-M}(\theta_{m'}+2\phi_{m'})^{2N-M}}{(2N-M)!}  \frac{\left(r^{(2N-M)\alpha+2}-r_1^{(2N-M)\alpha+2}\right)}{(2N-M)\alpha+2}(-1)^{2N-M}(2N-M)!  \\ \nonumber&= \frac{2[M(\theta_{m'}+2\phi_{m'})]^{2N-M}\left(r^{(2N-M)
\alpha+2}-r_1^{(2N-M)\alpha+2}\right)}{( r^2-  r_1^2)((2N-M)\alpha+2)}  \sim \frac{1}{\rho^{2N-M}}.
\end{align}
The result for  user $m$ can be proved using    steps similar to the ones above.

The result for  a random detection vector can be obtained by replacing  $(2N-M)$ with $1$ in the above expression, and the corresponding upper bound becomes
\begin{align}
&{\mathrm{P}}_{m',i^*}
\leq    \frac{2M[\theta_{m'}+2\phi_{m'}]\left(r^{
\alpha+2}-r_1^{\alpha+2}\right)}{( r^2-  r_1^2)(\alpha+2)} .
\end{align}
which is exactly the same result as the one shown in Lemma \ref{lemma1}, except for the extra term $M$ which was introduced by   upper bounding  the outage probability in \eqref{xcc}. Hence, the proof is completed.
\linespread{1}
 \bibliographystyle{IEEEtran}
\bibliography{IEEEfull,trasfer}
  \end{document}